\documentclass[12pt,leqno]{amsart}

\usepackage{palatino,mathrsfs}
\usepackage[mathcal]{euler}

\usepackage{xcolor}

\usepackage{graphicx} 
\usepackage{epstopdf,epsfig}

\usepackage{comment} 
\usepackage{amsmath,amsthm,amsfonts,amssymb,latexsym,amscd,enumerate,url,hyperref}
\usepackage{amssymb}
\usepackage{url}
\usepackage{graphicx}
\usepackage[all,knot]{xy}
\xyoption{arc}

\usepackage{chngcntr}
  \counterwithin{equation}{section}

\usepackage{multirow}


\theoremstyle{plain}
\newtheorem{theorem}{Theorem}

 \newtheorem{fact}{Fact}
\newtheorem{lemma}{Lemma}
\newtheorem{proposition}{Proposition}

\theoremstyle{definition}
\newtheorem{example}{Example}
\theoremstyle{remark}
\newtheorem{remark}{Remark}
\newtheorem{conjecture}{Conjecture}


\def\X{\mathbb X}
\def\R{\mathbb R}
\def\C{\mathbb C}

\def\Z{\mathbb Z}

\def\R{\mathbb R}
\def\C{\mathbb C}

\def\Z{\mathbb Z}

\def\BK{\boldsymbol{K}}
\def\BG{\boldsymbol{G}}
\def\g{\boldsymbol{\mathfrak{g}}}

\def\X{\mathbb{X}}
\def\fg{\mathfrak{g}}

\def\F{\mathcal{F}}

%


\begin{document}

\title{Symmetries of the Hydrogen atom and algebraic families}
 
 \author{Eyal Subag}
 \address{Department of Mathematics, Penn State University, University Park, PA 16802, USA.}
\email{eus25@psu.edu}

\maketitle

\par{\centering \textit{To the memory of Joseph L. Birman}\par}



\begin{abstract}
\noindent  We show how  the Schr\"{o}dinger equation for the hydrogen atom in two dimensions gives rise to an algebraic family of Harish-Chandra pairs that codifies  hidden symmetries. The hidden symmetries vary  continuously    between $SO(3)$, $SO(2,1)$ and the Euclidean group $O(2)\ltimes \R^2$.  We  show that solutions of the Schr\"{o}dinger equation may be organized into an algebraic family of Harish-Chandra modules.  Furthermore, we use Jantzen filtration techniques  to algebraically recover the spectrum of the  Schr\"{o}dinger operator.  This is  a first application to physics of  the algebraic families of Harish-Chandra pairs and modules developed in the work of Bernstein et al. \cite{Ber2016,Ber2017}.  
\end{abstract}

\setcounter{tocdepth}{2}

\tableofcontents

\section{Introduction and main results} 
The  Schr\"{o}dinger equation of the hydrogen atom has  a visible  $O(3)$ symmetry.  Almost a century ago, in the case of  bound states (negative energy), Pauli discovered  new invariants,  independent of the angular momentum, and used them  to derive  the spectrum of the  hydrogen atom  \cite{Pauli}.  A few years later,  Fock  showed that an $SO(4)$ symmetry, known as \textit{the hidden symmetry}, governs the degeneracy of the energy eigenspaces \cite{Fock:1935vv}.   Bargmann explained how the works of Pauli and Fock are related  \cite{Bargmann}. Later on,  it was shown that for scattering states (positive energy) there is an $SO_0(3,1)$ hidden symmetry, and for zero energy there     is an $SO(3)\ltimes \R^3$ (the Euclidean group) hidden symmetry. A similar analysis was carried out  in any dimension $n\geq 2$ of the configuration space  \cite{Bander1,Bander2}.  

The purpose of this paper is to show that an algebraic family of Harish-Chandra pairs can be associated to the Schr\"{o}dinger equation, and  the hidden symmetries  (for all  possible energy values)  can be obtained from the algebraic family. We will show that the parameter space for the family  naturally contains the spectrum of  the Schr\"{o}dinger operator and can be thought of as the space of all ``pseudo-energies".  We further show that the collection of all   physical solutions of the   Schr\"{o}dinger equation (for all  possible energy values)  arises from an algebraic family of Harish-Chandra modules.  Here we shall only consider the  case of the two-dimensional system (see \cite{Cisneros69,Torres1998,Parfitt2002} and reference therein); the general case will be considered elsewhere. 
It should be noted that the symmetries and group theoretical aspects  of the hydrogen atom system, as well as  its classical analogue, the Kepler-Coulomb system, have been   studied extensively  throughout the years, e.g., see \cite{GuilleminandSternberg,Singer,Wulfman}. We shall now describe the setup and results more carefully. 
The  Schr\"{o}dinger equation of the hydrogen atom in $n$ dimensions   ($n$-dimensional configuration space) with $n\geq 2$ is given by 
\begin{eqnarray}\nonumber
&&H\psi(x)=E\psi(x),\\ \nonumber
&& H=-\frac{\hbar^2}{2\mu}\triangle-\frac{k}{r},
\end{eqnarray}
where 
$x=(x_1,x_2,...,x_n)$  is the coordinate vector, $r=\sqrt{\sum_{i=1}^{n}x_i^2}$, $\mu$ is the reduced mass, $k$ is a positive constant, $\hbar$ is the reduced Planck's constant, and $\triangle$ is the Laplacian  $\sum_{i=1}^{n}\frac{\partial^2}{\partial x_i^2}$.
It has an obvious $O(n)$-symmetry but also   less obvious  larger symmetry groups, its  hidden symmetries, which we shall now describe. It is known   
that on  every eigenspace of $H$ with eigenvalue $E$,   the components of the angular momentum vector operator and  the components of the quantum Laplace-Runge-Lenz vector operator 
generate a Lie algebra, $\mathcal{G}_E$, such that  
\[\mathcal{G}_E\simeq \begin{cases}
\mathfrak{so}(n,1), & E> 0,\\
\mathfrak{so}(n)\ltimes \mathbb{R}^{n},& E=0,\\
\mathfrak{so}(n+1), & E< 0,
\end{cases} \]
e.g., see \cite{Bander1,Bander2}.  These Lie algebras are the \textit{ infinitesimal hidden symmetries}. Moreover, the spectrum of $H$  is composed of three disjoint 
pieces, $$\operatorname{Spec}(H)=\mathcal{E}=\mathcal{E}_b\sqcup \mathcal{E}_0\sqcup \mathcal{E}_{s},$$
where: $\mathcal{E}_b$ corresponds to  bound states and  is
 given explicitly by  $\mathcal{E}_b=\big\{E_n= -\frac{C}{(n_0+n)^2}|n=0,1,2,...\big\}$ for some positive constants  $C$ and $n_0$ (see \cite[Eq. 4.115]{Adams} for the exact formula);    $\mathcal{E}_{s}$ corresponds to  scattering states and  is given explicitly by  $\mathcal{E}_s=(0,\infty)$; and   $\mathcal{E}_0=\{0\}$. 
For any $E$ in the spectrum, the physical solution space of the Schr\"{o}dinger equation, $Sol(E)$, is invariant under the  corresponding  hidden symmetry group.  By  physical solutions  we mean  
solutions for the Schr\"{o}dinger equation that are regular at the origin and have best possible behavior at infinity, see Section  \ref{apen}. These solutions were obtained by physicists long ago. For negative energies they are square-integrable but otherwise they are not. 
For $E\in \mathcal{E}_b$ the space  $Sol(E)$,  carries a unitary irreducible representation of $SO(n+1)$; for $E\in \mathcal{E}_s$ it carries a unitary irreducible (principal series)  representation of $SO_0(n,1)$; and for $E=0$ it carries a unitary representation of the Euclidean group $SO(n)\ltimes \R^n$. In general, it is not known if the representation of $SO(n)\ltimes \R^n$ is irreducible,  but for $n=2$ it is (this  follows from \cite{Torres1998}).

So far we have reviewed   known facts. To explain the novelty of this work we shall first discuss algebraic families  of Harish-Chandra pairs and modules. The idea of \textit{contraction} of Lie groups and their representations  is  well known in  the  mathematical physics literature, see e.g., \cite{Inonu-Wigner53,Saletan61,Dooley85,Gilmore05,Subag12}. Recently, it was demonstrated how algebraic families  serve  a good mathematical framework for contractions  \cite{Ber2016,Ber2017}. We will show here that  energy is a natural deformation parameter (or contraction parameter) for the hydrogen atom system   that fits perfectly with the theory of algebraic families that was introduced in \cite{Ber2016,Ber2017}.

Roughly speaking, an algebraic family of complex Lie  algebras over a complex algebraic variety $X$ is  a collection  of complex  Lie algebras $\g=\{\g|_x\}_{x\in X}$  that vary algebraically in $x$. We shall refer to  $\g|_x$ as the \emph{fiber} of $\g$ at $x\in X$. Similarly, there is a natural notion  for an algebraic family of complex algebraic groups over $X$, $\BK=\{\BK|_x\}_{x\in X}$. 
An algebraic family of Harish-Chandra pairs  over $X$ is a pair $(\g,\BK)$, where $\g$ is an algebraic  family of complex Lie algebras  and $\BK$ is a compatible algebraic family of complex algebraic groups, both  over the same base $X$. In  Section \ref{sec3} we shall see that in the case of the two-dimensional hydrogen atom, the Schr\"{o}dinger operator $H$ gives rise to an algebraic family of Harish-Chandra pairs  $(\g,\BK)$  over  $X:=\C$, where $\BK$ is the constant family of groups $\C\times O(2,\C)$ and for any  $x\in \X$ the fibers of $\g$ satisfy  
\[ \g|_x\simeq \begin{cases}
\mathfrak{so}(3,\C) & x\neq 0,\\
 \mathfrak{so}(2,\C)\ltimes \C^2 & x=0.
\end{cases}\] 
The physical interpretation of the family gives rise to  a canonical real structure $\sigma$ on   $(\g,\BK)$,  (see Section \ref{sec3.3}) which leads to a family of real Harish-Chandra pairs   $(\g^{\sigma},\BK^{\sigma})$ over  $\X^{\sigma}=\R$ with $\BK^{\sigma}=\R\times O(2)$ and,  for $x\in \R$,
\[ \g^{\sigma}|_x\simeq  \begin{cases}
\mathfrak{so}(2,1), & x> 0,\\
 \mathfrak{so}(2)\ltimes \R^2, & x=0,\\
\mathfrak{so}(3), & x< 0.
\end{cases} \]
Moreover, by  construction, points of $\X$ correspond to ``generalized" eigenvalues of $H$.  This allows us to regard $\operatorname{Spec}(H)$ as a subset of  $\X^{\sigma} $. We will prove the following theorem. \\

\noindent \textbf{Theorem III.1.}
\textit{For any $E\in \operatorname{Spec}(H) \subset \X^{\sigma}$ the visible symmetry of the  Schr\"{o}dinger equation $H\psi=E\psi$   is  $\BK|_E^{\sigma}$, and the infinitesimal hidden symmetry is  $\g|_E^{\sigma}$. Furthermore,   $\g^{\sigma}$ can be lifted to a family of Lie groups  that  correspond to the  hidden symmetries. That is,  there is an algebraic family of complex algebraic groups $\BG$ over $\X$ with a real structure $\sigma_{\BG}$,  such that, for every $E\in \X^{\sigma}$,
 \[\BG|_E^{\sigma_{\BG}}\simeq  \begin{cases}
SO(2,1), & E>0,\\
O(2)\ltimes \R^2, & E=0,\\
SO(3), & E< 0.
\end{cases}\] }\\

 There is  an obvious notion of an algebraic family  of  Harish-Chandra modules for   $(\g,\BK)$  (also known as   an algebraic  family of $(\g,\BK)$-modules); see \cite[Sec. 2.4]{Ber2016}. In simple  cases, like the examples considered here, an algebraic family of Harish-Chandra modules for $(\g,\BK)$ is an algebraic vector bundle  $\F$ (of infinite rank) over $\X$ which carries compatible actions of $\g$ and  $\BK$. In particular, for $x\in \X$, the fiber of $\F$ at $x$ is a  $(\g|_x,\BK|_x)$-module. The family $\g$ has a natural counterpart  to the   Casimir element of a semisimple Lie algebra, which we shall call  the \textit{regularized Casimir} $\Omega$ (see Section \ref{sec4.1}). On generically irreducible families of $(\g,\BK)$-modules, $\Omega$ must act via multiplication by some  polynomial function on $\X$. In Section \ref{sec4.2} we show that the physical realization of $\g$ forces $\Omega$ to act on $\F|_E$ via multiplication by $\omega(E)=-\frac{E}{4}-\frac{k^2}{2}$. \\ 
 
 \noindent \textbf{Theorem IV.1.}
\textit{Let  $\F$ be a generically irreducible and quasi-admissible  family of $(\g,\BK)$-modules  on which $\Omega$ acts via multiplication by  $\omega(E)=-\frac{E}{4}-\frac{k^2}{2}$.   The collection of all the reducibility points of $\F$ coincides with $\mathcal{E}_b$.}\\

Using the real structure of $(\g,\BK)$, to   any  family $\F$ of Harish-Chandra modules  one can associate  a dual family $\mathcal{F}^{\langle\sigma\rangle}$ (the $\sigma$-twisted dual, see Section \ref{sec4.4.1} and \cite[Sec. 2.4 ]{Ber2017}). A nonzero morphism of   $(\g,\BK)$-modules from $\F$ to $\mathcal{F}^{\langle\sigma\rangle}$ (an intertwining operator) induces a canonical filtration on every fiber of $\F$. This is the Jantzen filtration; see Section \ref{sec4.4} and \cite[Sec. 4]{Ber2017}. The Jantzen filtration gives a new algebraic way to calculate the spectrum of $H$,  and the subset of the spectrum corresponding to bound states. In Section \ref{sec4.3} we  show that there are exactly two families of generically irreducible and quasi-admissible  families of $(\g,\BK)$-modules  such that:  $\Omega$ acts via multiplication by  $\omega(E)=-\frac{E}{4}-\frac{k^2}{2}$;  and that are  generated by their isotypic subsheaf,  $\F_0 $, corresponding to the trivial $SO(2,\C)$-type. We  prove the following.  \\

\noindent \textbf{Theorem IV.2.}
\textit{Let  $\F$ be one of the two generically irreducible and quasi-admissible  families of $(\g,\BK)$-modules on which $\Omega$ acts via multiplication by  $\omega(E)=-\frac{E}{4}-\frac{k^2}{2}$,  and that are generated by $\F_0$. The spectrum of $H$ coincides with the set of all  $E\in \X$ for which $\F|_E$  has  a nonzero infinitesimally unitary Jantzen quotient. Moreover,  $\mathcal{E}_b$   coincides with the set of all  $E\in \X$ for which $\F|_E$ has a nontrivial Jantzen filtration.}\\

For such a family $\F$, we  show that for each $E\in \operatorname{Spec}(H)$  the fiber $\F|_E $ has  exactly one infinitesimally unitary Jantzen quotient   $J(\F|_E)$. Furthermore, we prove the following result.\\

\noindent \textbf{Theorem IV.3.}
\textit{Let  $\F$ be as in Theorem \ref{th3}. Then for any $E\in  \operatorname{Spec}(H)$, the infinitesimally unitary Jantzen quotient $J(\F|_E)$ can be integrated to the unitary irreducible representation of the connected component of $\BG|_E^{\sigma_{\BG}}$ which is isomorphic to $Sol(E)$.}\\

In the discussion  section we speculate on  relations between $\F$,  the measurable field of Hilbert spaces arising from the spectral theory for $H$, and solution spaces for the Schr\"{o}dinger equation. We conjecture that for almost every energy $E$,  the Janzten quotient $J(\F|_E)$, the $SO(2)$-finite physical solutions in $Sol(E)$,  and  the $SO(2)$-finite vectors in the Hilbert space $\mathcal{H}_E$ appearing in the spectral decomposition, can be identified in a canonical way.

This project  was  initiated after stimulating  discussions with Joseph L. Birman.  This work is dedicated to his memory. The author thanks Moshe Baruch, Joseph Bernstein, Nigel Higson,  Ady Mann, and Willard Miller for many useful discussions.   The author  thanks the anonymous referee for  pointing out  the possible relation between our algebraic families and the measurable field associated with the  Schr\"{o}dinger operator.  The advice of Nigel Higson on the part dealing with  measurable fields is greatly appreciated.

 \section{Algebraic families}
 In this section we shall review the  formalism of algebraic families of Harish-Chandra pairs and their  modules.  We shall  avoid technicalities and convey some ideas by examples. More details can be found in   \cite{Ber2016,Ber2017}. Algebraic families can be defined over any  complex algebraic variety. For our purpose, it is enough to consider families over the simplest non-trivial affine variety,  $\mathbb{A}^1_{\C}=\C$. So, throughout this note we let $\X$ be the complex affine  line, i.e.,  $\X=\C$. As usual, we shall denote the structure sheaf of regular functions of  $\X$ by $O_{\X}$. Moreover, we shall freely identify  various families (various sheaves) over $\X$ with their space of global sections.  
 \subsection{Families of complex  Lie algebras}
An algebraic family of complex Lie algebras $\g$ over $\X$ is a locally free sheaf of $O_{\X}$-modules equipped with an $O_{\X}$-bilinear Lie bracket. Since $\X$ is affine, such a family is nothing else  but a Lie algebra over the ring  $\C[x]=O_{\X}(\X)$. Intuitively, $\g$ should be thought of as a collection of complex Lie algebras (the fibers of $\g$) parameterized by $\X$ that vary algebraically.  Recall that the fiber of $\g$ at $x_0\in \X$ is  $\g|_{x_0}:=\g/I_{x_0}\g$,  where $I_{x_0}$  is the maximal ideal of  $\C[x]$ consisting of  all functions the vanish at $x_0$. Of course,  $\g|_{x_0}$  is a  Lie algebra over $\C\simeq \C[x]/I_{x_0}$.

\begin{example}
Consider the constant family of Lie algebras over $\X$ with fiber $\mathfrak{gl}_3(\C)$,  that is,  the sheaf  of regular (algebraic) sections of the bundle  $\X\times\mathfrak{gl}_3(\C)$ over $\X$. This is  an  algebraic family of Lie algebras over $\X$. We shall denote it by $\boldsymbol{\mathfrak{gl}_3(\C)}$. Each fiber of $\boldsymbol{\mathfrak{gl}_3(\C)}$ is canonically identified with ${\mathfrak{gl}_3(\C)}$. The family $\boldsymbol{\mathfrak{gl}_3(\C)}$ contains interesting non-constant subfamilies. We shall now describe one such subfamily that will play a role in what follows. Let $\widetilde{\boldsymbol{\mathfrak{so}_3}}$ be the subfamily that is characterized by the following property. For every $x\in \X$, the fiber $\widetilde{\boldsymbol{\mathfrak{so}_3}}|_x$ is given (under the above mentioned identification) by 
\[ \left\{\left(\begin{matrix}
0 & \alpha & \beta \\
-\alpha  &0 &\gamma\\
 -x\beta &  -x \gamma& 0
\end{matrix}\right)\middle| \alpha,\beta,\gamma \in \C  \right\}. \]  
Note that \[ \widetilde{\boldsymbol{\mathfrak{so}_3}}|_x\simeq \begin{cases}
\mathfrak{so}(3,\C) & x\neq 0,\\
 \mathfrak{so}(2,\C)\ltimes \C^2 & x=0.
\end{cases}\] 
Let $e_{ij}$ with $i,j\in \{1,2,3\}$ be the standard basis of $\mathfrak{gl}_3(\C)$. The maps (from $\C$ to $\mathfrak{gl}_3(\C)$) given by 
\begin{eqnarray}\nonumber
&&j_1(x)=  e_{23}-x e_{32}, \\ \nonumber
&&j_2(x)=e_{13}-x e_{31}, \\ \nonumber
&& j_3(x)=e_{12}-e_{21},
\end{eqnarray}
define a basis for $\widetilde{\boldsymbol{\mathfrak{so}_3}}$ as a Lie algebra over $\C[x]$. In particular, 
\[\widetilde{\boldsymbol{\mathfrak{so}_3}}=\C[x]j_1\oplus \C[x]j_2\oplus \C[x]j_3. \]
The commutation relations are determined by 
\begin{eqnarray}\nonumber
&&[j_1(x),j_2(x)]=xj_3(x),  [j_2(x),j_3(x)]=j_1(x), [j_3(x),j_1(x)]=j_2(x). 
\end{eqnarray}
\end{example}
 \subsection{Families of  complex algebraic groups}  
Formally, an algebraic family of complex algebraic groups  is a smooth group scheme $\BG$ over $\X$ with  $\BG$ being a  smooth complex algebraic variety,  see \cite[Sec. 2.2]{Ber2016}. As in the case of families of Lie algebras, we can think about it as a collection of complex algebraic groups that vary algebraically.  For us, the most important example is the  constant family of groups over $\X$ with fiber $GL_3(\C)$. We shall denote it by $\boldsymbol{GL_3(\C)}$.  Any fiber of $\boldsymbol{GL_3(\C)}$ is canonically identified with  ${GL_3(\C)}$. We shall now define a subfamily $\widetilde{\boldsymbol{SO_3}}$ of $\boldsymbol{GL_3(\C)}$, whose associated family of Lie algebras (\cite[Sec. 2.2.1]{Ber2016}) is  $\widetilde{\boldsymbol{\mathfrak{so}_3}}$ from the previous section. 
\begin{example}
The family  $\widetilde{\boldsymbol{SO_3}}$ is uniquely determined by its fibers  which we shall now describe. 
For a nonzero $x\in \X$, $\widetilde{\boldsymbol{SO_3}}|_{x} $ is given by all $A\in SL_3(\C)$ such that $A^tD_xA=D_x$, where $D_x$ is the diagonal matrix in  $ GL_3(\C)$ with diagonal entries $(x,x,1)$. In particular,    $\widetilde{\boldsymbol{SO_3}}|_{x} \simeq SO(3,\C)$.
The remaining fiber is 
\[\widetilde{\boldsymbol{SO_3}}|_{0} =
\left\{\left(\begin{matrix}
A & v \\
0 &|A|
\end{matrix}\right)\middle| A\in O(2,\C), v\in \C^2  \right\} \simeq O(2,\C)\ltimes \C^2.  \]  
To show that $\widetilde{\boldsymbol{SO_3}}$  is indeed an algebraic family of complex algebraic groups one should follow the same  calculations as in  \cite{Bar2017}.
\label{ex2}
\end{example}
\subsection{Families of  Harish-Chandra pairs}  
Before we discuss families of Harish-Chandra pairs we recall the definition of a (classical) Harish-Chandra pair. 
A  Harish-Chandra pair consists of a pair $(\fg,K)$, where $\fg$ is a complex Lie algebra and $K$ a complex algebraic group acting on $\fg$, and an embedding of Lie algebras $\iota:\operatorname{Lie}(K)\longrightarrow \fg$ such that:
\begin{enumerate}
\item $\iota$ is equivaraint where $K$  acts on $\operatorname{Lie}(K)$ by the adjoint action.
\item The action of $\operatorname{Lie}(K)$ on $\fg$   coming from the derivative of the action of $K$ coincides with the action that is obtained by composing $\iota$ with the adjoint action of $\fg$ on itself. 
\end{enumerate}
 For more details see, e.g.,  \cite{Bernstein97,KnappVogan}. An 
 \textit{algebraic family of a Harish-Chandra pairs over $\X$} is defined in analogous way. The main difference is the replacement of $\fg$ by a family of Lie algebras $\g$ over $\X$ and replacement of $K$ by an algebraic family of complex algebraic groups $\BK$ over $\X$. We shall only consider the case in which $\BK$ is a constant family over $\X$. For precise definition see   \cite[Sec. 2.3]{Ber2016}. Below is an  example that will be of main interest for us.

\begin{example}
The family  $\widetilde{\boldsymbol{SO_3}}$ contains a constant subfamily $\boldsymbol{O_2}$ with fiber isomorphic to $O(2,\C)$. Explicitly, for any $x\in \X$,     
\[\boldsymbol{O_2}|_{x} =
\left\{\left(\begin{matrix}
A & 0 \\
0 &|A|
\end{matrix}\right)\middle| A\in O(2,\C)  \right\}.  \] 
Note that any matrix in $\boldsymbol{O_2}|_{x}$ is generated by matrices of the form
\[\widetilde{R}(\theta):=\left(\begin{matrix}
{R}(\theta)&0\\
0&1
\end{matrix}\right), \hspace{2mm} \widetilde{s}:=\left(\begin{matrix}
s&0\\
0 &-1
\end{matrix}\right),\] 
where
\begin{eqnarray}\nonumber
&&{R}(\theta):=\left(\begin{matrix}
\cos(\theta) & \sin(\theta)  \\
-\sin(\theta)  &\cos(\theta) 
\end{matrix}\right), \hspace{2mm} s:=\left(\begin{matrix}
0 & 1 \\
1&0
\end{matrix}\right),
\end{eqnarray}
with $\theta \in \C$.
The family $\boldsymbol{O_2}$  naturally acts on $\widetilde{\boldsymbol{\mathfrak{so}_3}}$  via conjugation. This action is determined  by the formulas  
\begin{eqnarray}\nonumber
 \widetilde{R}(\theta) \cdot j_1=&\cos(\theta)j_1-\sin(\theta)j_2, \hspace{2mm}&\widetilde{s} \cdot j_1=-j_2, \\
\nonumber
 \widetilde{R}(\theta) \cdot j_2=&\sin(\theta)j_1+\cos(\theta)j_2, \hspace{2mm}&\widetilde{s} \cdot j_2=-j_1, \\
\nonumber
 \widetilde{R}(\theta) \cdot j_3=&\hspace{-3cm}j_3, &\widetilde{s} \cdot j_3=-j_3. 
\end{eqnarray}
The  family of Lie algebras, $\boldsymbol{\mathfrak{o}_2}$, associated with  $\boldsymbol{O_2}$    coincides with the  subfamily of 
$\widetilde{\boldsymbol{\mathfrak{so}_3}}$  that is generated by $j_3$. All in all, the pair $(\widetilde{\boldsymbol{\mathfrak{so}_3}},\boldsymbol{O_2})$  is an algebraic family of Harish-Chandra pairs over $\X$.
 \label{ex3}
\end{example} 

\subsection{Real structure for  families}\label{sec2.4} 
In this section we discuss real structures of a family of complex groups and show how it gives rise to  a family of Lie groups.

In general, for any complex algebraic variety $X$  we  denote the complex conjugate variety by $\overline X$. The underlying set of $\overline X$ coincides with that of $X$, and a complex valued function on $\overline X$ is regular if (by definition) its complex conjugate is a regular function on $X$. In addition, $\overline{\overline{X}}$ is canonically isomorphic to $X$. An antiholomorphic morphism from   $X$ to another complex algebraic variety  $Y$ is  an algebraic  morphism from $X$ to $\overline{Y}$. Given any morphism $\psi:X\longrightarrow \overline{Y}$, there is a canonical morphism $\overline{\psi}:\overline{X}\longrightarrow Y$ such that $\psi =\overline{\psi}$ as  set theoretic maps.   A \emph{real structure}, or \emph{antiholomorphic involution}, of $X$ is a morphism  $\sigma_X:X\longrightarrow \overline{X}$ such that  the composition 
\[
\xymatrix{
X\ar[r]^{\sigma_X} & \overline{X}\ar[r]^{\overline{\sigma_X}} &  X} 
\]
is the identity, see \cite[Sec. 2.5]{Ber2016}. We denote by  $X^{\sigma}$ the fixed point set of $\sigma$. We treat $X^{\sigma}$  as a topological space  equipped with  the subspace topology as a subset of $X$ with its analytic topology.  We shall mainly work with $X=\C$ and $\sigma$ being the usual complex conjugation. In  this case $X^{\sigma}=\R$.   

If $\BG$ is an algebraic family of complex algebraic groups over $X$, then $\overline{\BG}$    is an algebraic family of complex algebraic groups over $\overline{X}$. A \emph{real structure}, or \emph{antiholomorphic involution}, of the family $\BG$ is a pair of involutions of varieties,  $\sigma_X:X\longrightarrow \overline{X}$   and $\sigma_{\BG}:\BG \longrightarrow \overline{\BG}$, such that the diagram 
\[
\xymatrix{
\BG\ar[d]\ar[r]^{\sigma_{\BG}} & \overline \BG\ar[d]\\
X\ar[r]^{\sigma_X} & \overline{X} } 
\]
is commutative, and it  induces a morphism $\BG \longrightarrow \sigma_X^*\overline{\BG}$ of algebraic families of complex algebraic groups over $X$. See \cite[Sec. 2.5]{Ber2016}.   For each $x\in X^{\sigma}$, the morphism $\sigma_{\BG}$ induces an antiholomorphic involution of the complex algebraic group  $\BG|_x$.  The fixed point set  of this involution, denoted by  $\BG|_x^{\sigma}$, is a Lie group. In this way we get a collection of Lie groups parameterized by  $X^{\sigma}$. We shall refer to this collection as a family of Lie groups over $X^{\sigma}$ and we denote it by $\BG^{\sigma}$.

For our purpose, it is enough to understand the simplest case in which $X$ is the variety $\X=\C$, $\sigma_{\X}$ is complex conjugation of complex numbers, $\BG$ is the constant family $\X\times GL_n(\C)$,    and $\sigma_{\BG}(x,g)=(\overline{x},\overline{g})$ where  $\overline{x}$ is the complex conjugate of $x\in \C$ and   $\overline{g}$ is the complex conjugate of the matrix  $g\in  GL_n(\C)$. In this case the family $\BG^{\sigma}$ is the constant family $\R\times GL_n(\R)$. We will obtain interesting non-constant  families of Lie groups as  subfamilies of    $\R\times GL_n(\R)$, as in the following example. 
\begin{example}\label{ex4}
Consider the family   $\widetilde{\boldsymbol{SO_3}}$ from example \ref{ex2} with the above mentioned real structure.  Then for any $x\in \R$, the group $\widetilde{\boldsymbol{SO_3}}|_{x}^{\sigma}$ is given by all real matrices in  $\widetilde{\boldsymbol{SO_3}}|_{x}$. In particular, $\widetilde{\boldsymbol{SO_3}}|_{1}^{\sigma}=SO(3)$,    $\widetilde{\boldsymbol{SO_3}}|_{-1}^{\sigma}=SO(2,1)$, and
\[\widetilde{\boldsymbol{SO_3}}|_{0}^{\sigma} =
\left\{\left(\begin{matrix}
A & v \\
0 &|A|
\end{matrix}\right)\middle| A\in O(2), v\in \R^2  \right\}.  \]
The isomorphism classes of the fibers are given by  
\[\widetilde{\boldsymbol{SO_3}}|_{x}^{\sigma}\simeq \begin{cases}
 SO(2,1) & x< 0,\\
 O(2)\ltimes \mathbb{R}^2& x=0,\\
 SO(3) & x> 0.
\end{cases}\]
\end{example}  
There are similar  definitions for  a \emph{real structure}, or \emph{antiholomorphic involution} of algebraic families of complex Lie algebras and algebraic families of Harish-Chandra pairs see \cite[Sec. 2.5]{Ber2016}.  In the above example the real structure on $\widetilde{\boldsymbol{SO_3}}$  induces one on  $(\widetilde{\boldsymbol{\mathfrak{so}_3}},\boldsymbol{O_2})$ from example \ref{ex3}. The corresponding family of real Harish-Chandra pairs  is obtained  by considering the  real matrices in each of the fibers $(\widetilde{\boldsymbol{\mathfrak{so}_3}}|_x,\boldsymbol{O_2}|_x)$. Explicitly, for any $x\in \R$,
\[ \widetilde{\boldsymbol{\mathfrak{so}_3}}|_x^{\sigma}=\left\{\left(\begin{matrix}
0 & \alpha & \beta \\
-\alpha  &0 &\gamma\\
 -x\beta &  -x \gamma& 0
\end{matrix}\right)\middle| \alpha,\beta,\gamma \in \R   \right\} \]  and
\[\boldsymbol{O_2}|_{x}^{\sigma} =
\left\{\left(\begin{matrix}
A & 0 \\
0 &|A|
\end{matrix}\right)\middle| A\in O(2)  \right\}.  \] 
In addition, 
\[ \widetilde{\boldsymbol{\mathfrak{so}_3}}^{\sigma}= \R[x]j_1\oplus \R[x]j_2\oplus \R[x]j_3. \] 
For $x\in \R$,  each homogenous space $\widetilde{\boldsymbol{SO_3}}|_{x}^{\sigma}/\boldsymbol{SO_2}|_{x}^{\sigma}$ has  a geometric meaning. For $x>0$,  it  is a two-sheeted hyperboloid.  For $x<0$, it is an ellipsoid,  and as $x$ approaches zero, the space ``flattens" into two parallel planes. See figure \ref{fig12}.

  \begin{center}
\begin{figure}
\includegraphics[width=\textwidth]{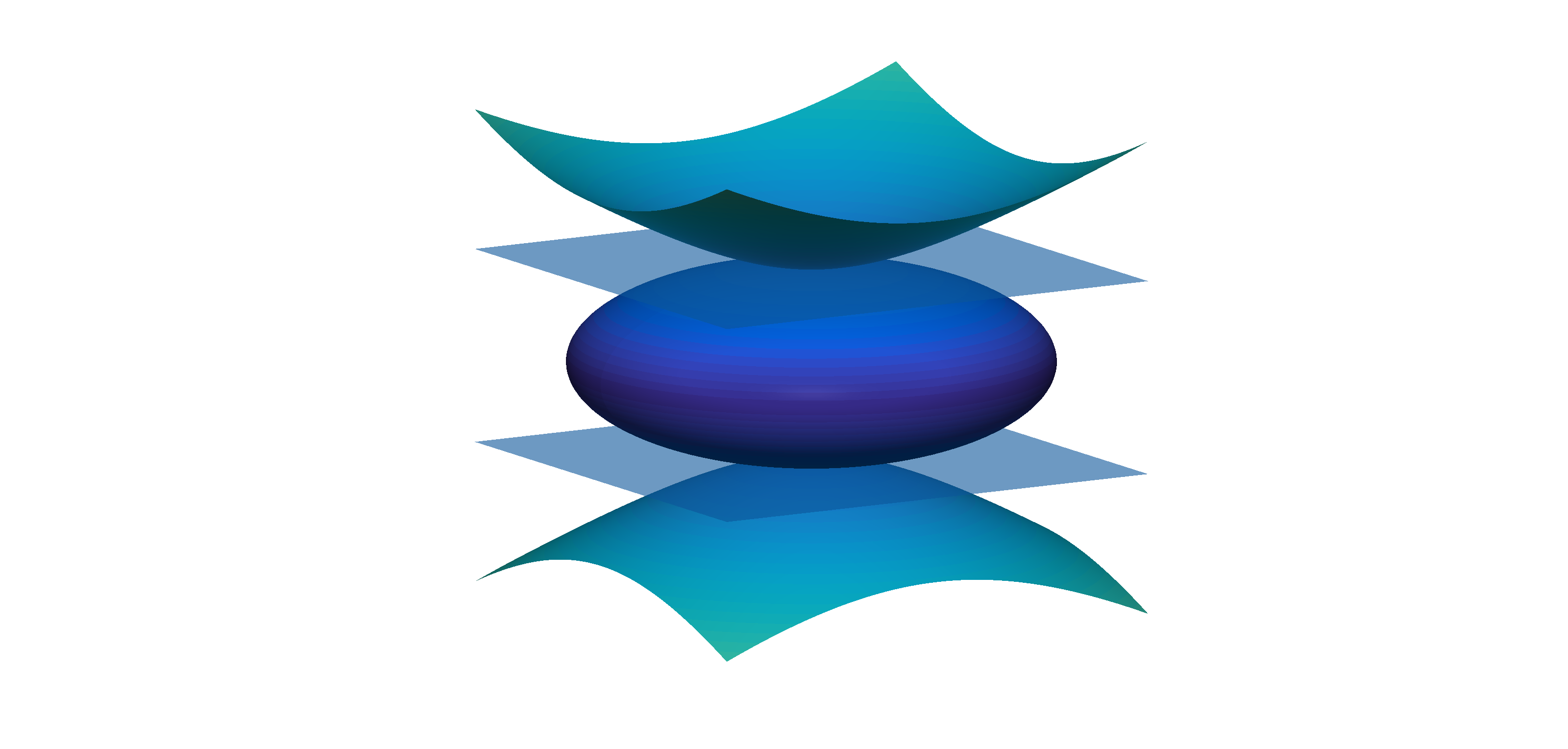}
\caption{The homogeneous spaces $\widetilde{\boldsymbol{SO_3}}|_{x}^{\sigma}/\boldsymbol{SO_2}|_{x}^{\sigma}$}\label{fig12}
\end{figure}
\end{center}


\section{The family of Harish-Chandra pairs  and hidden symmetries }\label{sec3}
In this section we show how the Schr\"{o}dinger equation of the hydrogen atom in two dimensions gives rise to an algebraic family of Harish-Chandra pairs equipped  with a real structure. We then show that the various hidden symmetries can be recovered from the algebraic family.  
\subsection{The infinitesimal hidden symmetry}\label{cent} 
The  Schr\"{o}dinger equation of the hydrogen atom in two dimensions is given by \begin{eqnarray}\label{201}\nonumber
&&H\psi=E\psi,\\ \nonumber
&& H=-\frac{1}{2}(\partial_{xx}+\partial_{yy})-\frac{k}{\sqrt{x^2+y^2}}.
\end{eqnarray}
Here $k$ is a positive constant and  we use atomic units, that is,  $\mu=\hbar=1$.
We note that  $H$ belongs to the  algebra of all smooth differential operators on the space of smooth complex valued functions on $\mathbb{S}=\R^2\setminus \{0\}$. We denote this algebra by $\mathbb{D}$. 
 The algebra $\mathbb{D}$ is naturally filtered by the degree of a differential operator. We denote this increasing filtration by $\mathbb{D}=\cup_{n\geq 0} \mathbb{D}_n$.  Let  $C_{\mathbb{D}}(H)$ be the centralizer of $H$ in $\mathbb{D}$  and $C^2_{\mathbb{D}}(H)$ the intersection of $C_{\mathbb{D}}(H)$ with $ \mathbb{D}_2$. The next lemma follows from a direct calculation. 
\begin{lemma}
The complex vector space  $C^2_{\mathbb{D}}(H)$ is  spanned by:
\begin{eqnarray}\nonumber
&&L:=y\partial_x-x\partial_y, 	L^2, \mathbb{I}\hspace{1mm} (\text{the identity operator}), \\ \nonumber
&& H:=-\frac{1}{2}(\partial_{xx}+\partial_{yy})-\frac{k}{\sqrt{x^2+y^2}},\\ \nonumber
&&   A_y:=\frac{i}{\sqrt{2}}\left(y\partial_{xx}-x\partial_{xy}-\frac{1}{2}\partial_{y}+k\frac{y}{\sqrt{x^2+y^2}}\right),\\ \nonumber
&& A_x:= \frac{i}{\sqrt{2}}\left(x\partial_{yy}-y\partial_{xy}-\frac{1}{2}\partial_{x}+k\frac{x}{\sqrt{x^2+y^2}}\right).
\end{eqnarray}\label{lem1}
\end{lemma}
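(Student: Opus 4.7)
The plan is to verify this by direct computation. Parameterize a general $D\in\mathbb{D}_2$ as $D=P_2+P_1+P_0$, where $P_k=\sum_{i+j=k}a_{ij}(x,y)\partial_x^i\partial_y^j$ has smooth coefficients on $\mathbb{S}=\R^2\setminus\{0\}$. Since $H$ has order $2$, the commutator $[H,D]$ has order at most $3$, and the condition $[H,D]=0$ produces a hierarchy of PDEs on the coefficients $a_{ij}$, which I would analyze order by order.

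At the top (degree-$3$) order, $[H,D]$ depends only on the principal symbol $\sigma_2(P_2)=a_{20}\xi^2+a_{11}\xi\eta+a_{02}\eta^2$ and on the principal symbol $-\tfrac12(\xi^2+\eta^2)$ of $H$; its vanishing amounts to saying that $\sigma_2(P_2)$ is a rank-two Killing tensor for the flat Euclidean metric on $\R^2$. Since the isometry Lie algebra of $\R^2$ is spanned by the three Killing vectors $\partial_x$, $\partial_y$, $L$, the space of rank-two Killing tensors is six-dimensional, spanned by the symbols $\xi^2$, $\eta^2$, $\xi\eta$, $y\xi^2-x\xi\eta$, $y\xi\eta-x\eta^2$, $(y\xi-x\eta)^2$.

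The subprincipal (degree-$2$) condition relates first derivatives of the coefficients of $P_2$ to the principal symbol of $P_1$; the Coulomb contribution $[-k/r,P_2]$ first appears at degree $1$, producing terms with $1/r^3$ singularities coming from $\partial_i(1/r)=-x_i/r^3$. These singular contributions can be absorbed only by $[-\tfrac12\triangle,P_0]$ with $P_0$ containing compensating $k/r$ multiplication terms, forcing the specific Runge--Lenz pattern. Pursuing the analysis through degrees $1$ and $0$ then rigidifies all remaining coefficients. The three translation-type Killing-tensor directions $\xi^2$, $\eta^2$, $\xi\eta$ cannot be lifted individually (since $\partial_x$, $\partial_y$ themselves fail to commute with $-k/r$), but only in the combination $-\tfrac12(\xi^2+\eta^2)$ corresponding to $H$; the two mixed directions $y\xi^2-x\xi\eta$ and $y\xi\eta-x\eta^2$ lift uniquely, modulo the order-$\leq 1$ centralizer $\C\mathbb{I}\oplus\C L$, to $A_y$ and $-A_x$; and $(y\xi-x\eta)^2$ lifts to $L^2$ itself. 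A preliminary computation for order $\leq 1$ (writing $D=a\partial_x+b\partial_y+c$, imposing $[H,D]=0$, and solving in turn the order-$2$, order-$1$, and order-$0$ equations) shows that $C_{\mathbb{D}}^1(H)=\C\mathbb{I}\oplus\C L$, since the putative translation coefficients are killed by the Coulomb term $kx/r^3$.

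The main obstacle is the bookkeeping needed for the lower-order cancellation---tracking how the polynomial coefficients of $P_2$ interact with the $1/r$ singularity of the potential. I would organize this by separating contributions according to their homogeneity under the dilation $(x,y)\mapsto\lambda(x,y)$ and their power of $1/r$, thereby reducing the analysis to a finite linear system. Once these constraints are exhausted, the six operators $\mathbb{I}$, $L$, $L^2$, $H$, $A_x$, $A_y$ are manifestly linearly independent (their principal symbols realize four of the six Killing-tensor dimensions, and $\mathbb{I}$, $L$ supply the remaining two from strictly lower orders), and therefore span $C_{\mathbb{D}}^2(H)$.
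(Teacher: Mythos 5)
The paper offers no argument for this lemma beyond the sentence ``follows from a direct calculation,'' so there is no written proof to match yours against; what you have produced is a correct and well-structured way to organize exactly that calculation. Your key organizing idea --- that the degree-$3$ part of $[H,D]$ forces the principal symbol of $P_2$ to be a rank-two Killing tensor of the flat metric, hence a linear combination of the six symmetric products of $\partial_x,\partial_y,L$, after which the lower-order equations decide which of these six directions lift --- is the standard symbol-calculus route to classifying second-order symmetry operators, and all of your structural claims check out: the Coulomb term enters only at degrees $1$ and $0$, its $x_i/r^3$ singularities can only be cancelled by a zeroth-order correction proportional to $k/r$ times a polynomial, the Cartesian tensors $\xi^2,\eta^2,\xi\eta$ lift only in the trace combination giving $H$, the mixed tensors lift to the Runge--Lenz operators $A_y$ and $-A_x$ (whose principal symbols are indeed $\tfrac{i}{\sqrt2}\xi(y\xi-x\eta)$ and $-\tfrac{i}{\sqrt2}\eta(y\xi-x\eta)$), $(y\xi-x\eta)^2$ lifts to $L^2$, and the order-$\leq 1$ centralizer is $\C\mathbb{I}\oplus\C L$. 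The count $4+2=6$ and the linear independence argument via symbols are also right. The only thing separating your proposal from a complete proof is that the decisive degree-$1$ and degree-$0$ eliminations (ruling out $\xi^2-\eta^2$ and $\xi\eta$, and pinning down the unique lifts of the mixed tensors modulo $\C\mathbb{I}\oplus\C L$) are described rather than executed; these are finite linear-algebra computations once you grade by dilation weight and power of $1/r$ as you propose, so the plan would go through, but as written it is at the same level of detail as the paper's ``direct calculation'' for precisely the steps where the actual work lives.
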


We shall consider the complex associative sub-algebra of  $C_{\mathbb{D}}(H)$ that  is generated (as an algebra) by $C^2_{\mathbb{D}}(H)$.  We denote it by $\mathcal{B}$.  
\begin{remark}
The algebra $\mathcal{B}$ is the  quantum superintegrable system that is  known as $E18$, see \cite{Miller76,Miller01} and specifically \cite[Sec. 3.1]{Miller13}. It is not known if $\mathcal{B}=C_{\mathbb{D}}(H)$, but there are  indications that this is the case. For example, in the corresponding classical superintegrable system the analogous statement holds. See the paragraph following Definition 10 in  \cite{Miller13}.
\end{remark}
 Observe that $\mathcal{B}$ is in fact an associative algebra over the free polynomial ring $\C[H]$ and it is generated, as an algebra over $\C[H]$,  by  $\{ A_x, A_y, L,\mathbb{I}\}$. The generating set  $\{ A_x, A_y, L,\mathbb{I}\}$  is minimal in two ways. There are no smaller sets of generators and  the sum of  degrees (with respect to the natural filtration of $\mathbb{D}$) of the generators is minimal among all other sets of generators.
 In fact, all such minimal sets of four generators,  with their sum of their degrees equal to $5$,  span the same four dimensional  vector subspace  $V:=\operatorname{span}_{\C}\{ A_x, A_y, L, \mathbb{I}\}$.  Let $\boldsymbol{V}$ be the $\C[H]$-module generated  by  $V$. 
 \begin{lemma}
 $\boldsymbol{V}$  is a Lie algebra over $\C[H]$. That is,  $\boldsymbol{V}$   is an algebraic family of complex  Lie algebras over $\C$.
 \end{lemma}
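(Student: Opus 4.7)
The plan is to exploit the fact that $\boldsymbol{V}$ sits inside the associative algebra $\mathcal{B}\subseteq \mathbb{D}$, so the commutator $[a,b]=ab-ba$ automatically supplies an antisymmetric bracket satisfying the Jacobi identity. To upgrade this into the statement that $\boldsymbol{V}$ is a Lie algebra over $\C[H]$ (equivalently, an algebraic family of complex Lie algebras over $\X=\C$), I need to verify three things: (i) the bracket is $\C[H]$-bilinear when restricted to $\boldsymbol{V}$; (ii) $[\boldsymbol{V},\boldsymbol{V}]\subseteq \boldsymbol{V}$; and (iii) $\boldsymbol{V}$ is a locally free $\C[H]$-module of finite rank.

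Step (i) is essentially free. Since $\mathcal{B}\subseteq C_{\mathbb{D}}(H)$, the operator $H$, and hence every polynomial in $H$, is central in $\mathcal{B}$. Consequently, for any $p\in \C[H]$ and $X,Y\in \boldsymbol{V}$, one has $[pX,Y]=p[X,Y]=[X,pY]$, which is exactly $\C[H]$-bilinearity. For step (iii), I would show that the four generators $A_x, A_y, L, \mathbb{I}$ are $\C[H]$-linearly independent in $\mathbb{D}$ by a principal-symbol argument: the generators sit at differing orders and their leading symbols have distinct monomial support in the cotangent variables, so a hypothetical nontrivial $\C[H]$-relation would produce, at the top order, a polynomial identity among principal symbols that manifestly fails. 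This yields $\boldsymbol{V}\cong \C[H]^{4}$ as a free module of rank $4$.

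The substantive content is step (ii). By the $\C[H]$-bilinearity obtained in step (i), closure reduces to checking the brackets among the four generators. Brackets with $\mathbb{I}$ vanish trivially. The two brackets $[L,A_x]$ and $[L,A_y]$ are short direct calculations whose outputs are $\C$-linear combinations of $A_x$ and $A_y$; this reflects the fact that $(A_x,A_y)$ transforms as a planar vector under the $SO(2,\C)$-rotation generated by $L$. The decisive calculation is $[A_x,A_y]$. Expanding each $A_i$ into its differential part and its Coulomb-potential part $\tfrac{kx_i}{r}$ and reorganizing, all the $r^{-1}$ cross-terms collapse and the result takes the form $c\cdot H\cdot L$ for an explicit nonzero scalar $c$. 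Since $cHL\in \C[H]\cdot L\subseteq \boldsymbol{V}$, closure of $\boldsymbol{V}$ under the commutator is achieved.

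I expect the main obstacle to be the evaluation of $[A_x,A_y]$: both operators are second order and contain the Coulomb singularity $r^{-1}$, so a priori their commutator could produce third-order differential operators, first-order terms, and $r^{-3}$ contributions simultaneously. The quantum Runge--Lenz cancellation is what collapses all of these pieces into a single multiple of $H\cdot L$; this is precisely the algebraic phenomenon that forces $H$ to appear as a scalar coefficient and thereby produces the nontrivial family structure $\widetilde{\boldsymbol{\mathfrak{so}_3}}$, with $\mathfrak{so}(3,\C)$ as generic fiber and the contracted Euclidean algebra $\mathfrak{so}(2,\C)\ltimes \C^{2}$ at $H=0$.
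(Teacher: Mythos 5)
Your proposal is correct and follows essentially the same route as the paper: establish that $\{A_x,A_y,L,\mathbb{I}\}$ are $\C[H]$-linearly independent so that $\boldsymbol{V}$ is free of rank $4$, and then verify closure by direct computation of the generator brackets, the decisive one being $[A_x,A_y]=-HL$ (your constant $c$ equals $-1$). The extra observations you make — $\C[H]$-bilinearity from centrality of $H$ and the symbol argument for independence — are correct elaborations of steps the paper leaves implicit.
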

 \begin{proof}
From the linear independence of $\{ A_x, A_y, L,\mathbb{I}\}$, 
\begin{eqnarray}\nonumber
&& \boldsymbol{V}=\C[H]{L}\oplus\C[H]{A}_x\oplus
\C[H]{A}_y \oplus
\C[H] \mathbb{I}.
\end{eqnarray}
  By direct calculation, 
 \begin{eqnarray}
&&[{A}_y,{L}]={A}_x,[{L},{A}_x]={A}_y, [{A}_x,{A}_y]=-H{L},
\label{cm}
\end{eqnarray}
 and $\mathbb{I}$ commutes with  everything else. 
 \end{proof}
For purposes that will be made clear below, from now on we shall  use the notation $E$  for the indeterminate $H$. Occasionally, and abusing notation, we shall treat the indeterminate $E$ as a point in $\C$, this will be clear from the context. 
For any nonzero  $E\in \C$, the fiber  $\boldsymbol{V}|_E$ is a complex reductive Lie algebra over $\C$ (it is isomorphic to $\mathfrak{gl}_2(\C)$).  Let $\g$ be the  largest Lie subalgebra (over $\C[E])$ of $\boldsymbol{V}$ such that for any nonzero $E\in \C$, the  fiber $\g|_E$ is semisimple. The commutation relations (\ref{cm}) imply that $\{A_x, A_y, L\}$ form a basis for $\g$ over $\C[E]$. 

\begin{proposition}\label{propos1}
$\g$ is isomorphic to $\widetilde{\boldsymbol{\mathfrak{so}_3}}$.
\end{proposition}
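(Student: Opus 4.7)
The plan is to exhibit an explicit isomorphism of the two rank-three algebraic families of Lie algebras over $\C$ by matching their distinguished generating systems. On one side, $\g$ is free over $\C[E]$ with basis $\{A_x,A_y,L\}$ and commutation relations \eqref{cm}; on the other side, $\widetilde{\boldsymbol{\mathfrak{so}_3}}$ is free over $\C[x]$ with basis $\{j_1,j_2,j_3\}$ and relations $[j_1,j_2]=xj_3$, $[j_2,j_3]=j_1$, $[j_3,j_1]=j_2$. Comparing the two tables of brackets, the only discrepancy is the scalar in one bracket: $[A_x,A_y]=-EL$ versus $[j_1,j_2]=+xj_3$. So the whole proposition reduces to an identification of bases, once one decides how to absorb this sign.

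Concretely I would define $\phi:\g\longrightarrow \widetilde{\boldsymbol{\mathfrak{so}_3}}$ covering the automorphism $E\mapsto -E$ of the base $\C$, by setting
\[ \phi(A_x)=j_1,\qquad \phi(A_y)=j_2,\qquad \phi(L)=j_3, \]
and extending $\C[E]$-linearly through the ring identification $\C[x]\simeq\C[E]$, $x\mapsto -E$. Because both sides are free of the same rank over the corresponding polynomial rings and $\phi$ sends a module basis to a module basis, $\phi$ is an isomorphism of sheaves of $\Ox$-modules. Compatibility with the Lie bracket reduces, by $\C[E]$-bilinearity, to checking it on the three generators, and this is a one-line verification:
\[ \phi([A_y,L])=[j_2,j_3]=j_1=\phi(A_x), \quad \phi([L,A_x])=[j_3,j_1]=j_2=\phi(A_y), \]
\[ \phi([A_x,A_y])=[j_1,j_2]=xj_3=(-E)\phi(L)=\phi(-EL). \]

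The real content of the proposition is the preceding lemma together with this matching: the energy parameter $E$ of the Schr\"odinger operator is precisely the deformation parameter for the contraction family linking $\mathfrak{so}(3,\C)$ to $\mathfrak{so}(2,\C)\ltimes\C^2$. There is no significant obstacle. The only point requiring a little care is the sign in the base identification, forced by the fact that $[A_x,A_y]=-EL$ while $[j_1,j_2]=+xj_3$. It can be absorbed either by the base change $x\mapsto -E$ used above or, if one prefers an isomorphism covering the identity on $\C$, by the rescaling $A_x\mapsto iA_x$, $A_y\mapsto iA_y$; either convention produces a valid isomorphism of algebraic families of complex Lie algebras over $\C$.
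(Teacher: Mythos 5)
Your proof is correct and takes essentially the same approach as the paper: the paper's proof consists precisely of recording the correspondence $x\longleftrightarrow -E$, $A_x\longleftrightarrow j_1$, $A_y\longleftrightarrow j_2$, $L\longleftrightarrow j_3$ and asserting it is an isomorphism, with your bracket computations supplying the (omitted) verification. One small caveat: your alternative rescaling $A_x\mapsto iA_x$, $A_y\mapsto iA_y$ over the identity on the base does prove the complex statement, but it is not compatible with the real structures introduced later (the paper's subsequent remark shows the compatible isomorphisms are exactly those with $x\longleftrightarrow -rE$, $r>0$), so the base change $x\mapsto -E$ is the version the paper actually needs.
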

\begin{proof}
As mentioned above, 
\begin{eqnarray}\nonumber
&& \boldsymbol{\mathfrak{g}}=\C[E]{L}\oplus\C[E]{A}_x\oplus
\C[E]{A}_y,
\end{eqnarray}
and  
 \begin{eqnarray}
&&[{A}_y,{L}]={A}_x,[{L},{A}_x]={A}_y, [{A}_x,{A}_y]=-E{L}.
\end{eqnarray} \label{204}
The correspondence  
\begin{eqnarray}
&&x\longleftrightarrow-E,\quad  \mathcal{A}_x\longleftrightarrow j_1,\quad \mathcal{A}_y\longleftrightarrow j_2, \quad \mathcal{L} \longleftrightarrow j_3
\label{eq3.4}
\end{eqnarray}
defines an isomorphism between  $\g$ and $\widetilde{\boldsymbol{\mathfrak{so}_3}}$.  
\end{proof}

\begin{remark}
We shall see in Sections \ref{sec3.2} and \ref{sec3.3}  that  the  isomorphism (\ref{eq3.4})  extends to an isomorphism of algebraic families of Harish-Chandra pairs equipped with a real structure. This isomorphism  is essentially unique; all such isomorphisms are given by
\[x\longleftrightarrow-rE,\quad  {A}_x\longleftrightarrow cj_1,\quad {A}_y\longleftrightarrow c^{-1}j_2, \quad {L} \longleftrightarrow j_3,
\] for some positive $r$ and $c$. 
\end{remark}

\subsection{The visible symmetries }\label{sec3.2}

The group $O(2)$  naturally acts on    $\mathbb{S}=\R^2\setminus\{0\}$. This action induces an action of  $O(2,\C)$ on  $\mathbb{D}$ which descends to $\g$ and defines an action of $\BK$, the constant family over $\X$ with fiber $O(2,\C)$. Explicitly, 
\begin{eqnarray}\nonumber
 &{R}(\theta) \cdot {A}_x=\cos(\theta){A}_x-\sin(\theta){A}_y, \hspace{2mm}&s \cdot {A}_x=-{A}_y, \\
\nonumber
 &{R}(\theta) \cdot {A}_y=\sin(\theta){A}_x+\cos(\theta){A}_y, \hspace{2mm}&s \cdot {A}_y=-{A}_x, \\
\nonumber
 &{R}(\theta) \cdot {L}={L},\hspace{37mm} &s \cdot {L}=-{L}. 
\end{eqnarray}
In particular, we see that under the canonical isomorphism ${\boldsymbol{O_2}}\simeq \BK$ the following holds.
\begin{lemma}
The isomorphism \ref{eq3.4}  canonically extends to an isomorphism of algebraic families  of Harish-Chandra pairs over $\X$ between $(\widetilde{\boldsymbol{\mathfrak{so}_3}},\boldsymbol{O_2})$ and $(\g,\BK)$.\qed
\end{lemma}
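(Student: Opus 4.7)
The plan is to assemble the claimed isomorphism of Harish-Chandra pairs from three pieces: the Lie-algebra isomorphism, an identification of the two group families, and the verification of the Harish-Chandra compatibilities. The first piece is already in hand: Proposition \ref{propos1} supplies the isomorphism $\varphi \colon \g \xrightarrow{\sim} \widetilde{\boldsymbol{\mathfrak{so}_3}}$ of algebraic families of complex Lie algebras, determined on generators by $A_x \mapsto j_1$, $A_y \mapsto j_2$, $L \mapsto j_3$ together with the base identification $E \mapsto -x$. The second piece is equally routine: $\BK$ is by construction the constant family over $\X$ with fiber $O(2,\C)$, and $\boldsymbol{O_2}$ is a constant family with the same fiber inside $\widetilde{\boldsymbol{SO_3}}$, so there is a canonical isomorphism of algebraic families of complex algebraic groups $\psi \colon \BK \xrightarrow{\sim} \boldsymbol{O_2}$ specified on generators by $R(\theta) \mapsto \widetilde{R}(\theta)$ and $s \mapsto \widetilde{s}$.

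The remaining work is to verify that the pair $(\varphi, \psi)$ satisfies the two axioms of an isomorphism of Harish-Chandra pairs. First, I would check equivariance, i.e., that $\varphi(g \cdot Z) = \psi(g) \cdot \varphi(Z)$ for every local section $g$ of $\BK$ and every local section $Z$ of $\g$. Since both actions are $\C[E]$-linear and the fiberwise group is generated by $R(\theta)$ and $s$, it suffices to check the identity on the three generators $A_x, A_y, L$ of $\g$ over $\C[E]$. But the formulas displayed just before the lemma (for $R(\theta)$ and $s$ acting on $A_x, A_y, L$) coincide, term for term, with the formulas in Example \ref{ex3} (for $\widetilde{R}(\theta)$ and $\widetilde{s}$ acting on $j_1, j_2, j_3$) once one substitutes through $\varphi$, so equivariance is automatic. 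Second, I would check that $\varphi$ intertwines the two Lie-algebra embeddings $\operatorname{Lie}(\BK) \hookrightarrow \g$ and $\operatorname{Lie}(\boldsymbol{O_2}) \hookrightarrow \widetilde{\boldsymbol{\mathfrak{so}_3}}$; but the former is the $\C[E]$-line spanned by $L$ and the latter the $\C[x]$-line spanned by $j_3$, and $\varphi(L) = j_3$, so the intertwining is immediate. The compatibility of the two Lie-algebra actions (the derivative of the group action versus the bracket with the image of $\operatorname{Lie}(\BK)$) then follows, as in the classical case, by differentiating the equivariance identity.

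The only step with any computational content is the derivation of the formulas for the $\BK$-action on $\g$ in the first place, i.e., how conjugation by $R(\theta)$ and by $s$ transforms the explicit differential operators $A_x, A_y, L$ produced by Lemma \ref{lem1}. This is a short direct substitution: a rotation of the $(x,y)$-plane permutes $\partial_x$ and $\partial_y$ by the inverse rotation and leaves $\sqrt{x^2+y^2}$ invariant, while the reflection $s$ swaps $(x,y)$ and converts $A_x$ into $-A_y$ and $L$ into $-L$. The main obstacle, such as it is, is purely bookkeeping: one must fix conventions for the direction of $R(\theta)$ and for the sign of the reflection action so that the resulting formulas land on the Example \ref{ex3} side rather than on the transpose or inverse. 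Once those conventions are set, everything matches and the lemma follows.
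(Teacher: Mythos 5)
Your proposal is correct and follows exactly the route the paper intends: the paper states this lemma with no written proof precisely because the displayed formulas for the $O(2,\C)$-action on $A_x,A_y,L$ match, term for term, those for $\widetilde{R}(\theta)$ and $\widetilde{s}$ acting on $j_1,j_2,j_3$ in Example \ref{ex3}, and the Lie-algebra embeddings correspond via $L\mapsto j_3$. Your explicit verification of the equivariance and embedding compatibilities, and your flagging of the sign/orientation conventions needed to make the reflection and rotation formulas land on the Example \ref{ex3} side, simply supplies the bookkeeping the paper omits.
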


\subsection{Real structure on $(\g,\BK)$}\label{sec3.3}
 We shall now describe a natural real structure on $(\g,\BK)$. Under the isomorphism $(\g,\BK)\simeq(\widetilde{\boldsymbol{\mathfrak{so}_3}}, \boldsymbol{O_2}) $ the real structure coincides with the one that was given in Example  \ref{ex4}. 
 
The natural inner product on $L^2(\R^2\setminus{\{0\}})$ allows us to define a real structure on $(\g,\BK)$ in the following way. For $T\in \mathbb{D}\subset \operatorname{End}(L^2(\R^2\setminus{\{0\}}))$,
the formula
   \[ \sigma(T)=-T^*,\]
where $T^*$ is the adjoint of $T$, defines a conjugate linear involution
  of    $\mathbb{D}$ which descends to an antiholomorphic involution of $\g$.
Similarly, the action of  $O(2,\C)$ on  $L^2(\R^2\setminus{\{0\}}) $ embeds it into $\operatorname{Aut}(L^2(\R^2\setminus{\{0\}}))$  and we can define an antiholomorphic involution of  $O(2,\C)$ by 
\[\sigma_{O(2,\C)}(g)=(g^*)^{-1}, \]
where here we identify $g\in O(2,\C)$ with its image in $\operatorname{Aut}(L^2(\R^2\setminus{\{0\}}))$. By direct calculation, in terms of the matrix group $O(2,\C)$, $\sigma_{O(2,\C)}$ turns out to be the usual complex conjugation of matrices. 
As was described in Section \ref{sec2.4},  this defines a real structure $\sigma_{\BK}$ on the constant family $\BK$.
\begin{remark}
A standard approach for  solving  the   Schr\"odinger equation $H\psi=E\psi$  is to find a complete  set of  Hermitian operators that commute with $H$. This  was our  motivation for the definition of  $\g$ using the centralizer $C_{\mathbb{D}}(H)$  and for the definition of  the real structure $\sigma$ using the adjoint of a linear operator. 
\end{remark}

 \begin{lemma}
The isomorphism $(\g,\BK)\simeq(\widetilde{\boldsymbol{\mathfrak{so}_3}}, \boldsymbol{O_2}) $ is compatible with the real structure on $(\g,\BK)$, introduced above, and the real structure on $(\widetilde{\boldsymbol{\mathfrak{so}_3}}, \boldsymbol{O_2}) $, introduced in example \ref{ex4}.\qed
\end{lemma}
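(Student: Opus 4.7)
First I would verify compatibility on the base $\X=\C$. Under the identification (\ref{eq3.4}), $x\leftrightarrow -E$, where $E=H$ is the coordinate on $\operatorname{Spec}\C[H]$. Since $H$ is formally self-adjoint, the involution $\sigma(H)=-H^{*}=-H$ on the operator side translates, via $x=-E$, to ordinary complex conjugation on $\X$; both sides have the same fixed locus $\R$, identified compatibly.

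Next I would check the three generators of $\g$. The real structure $\sigma(T)=-T^{*}$ fixes $T\in\mathbb{D}$ precisely when $T$ is formally anti-Hermitian on $L^{2}(\mathbb{S})$. A direct integration-by-parts computation shows that $L^{*}=-L$ and that the parenthesised real differential operators inside $A_{x}$ and $A_{y}$ in Lemma \ref{lem1} are formally Hermitian; the prefactor $i/\sqrt{2}$ then makes $A_{x}$ and $A_{y}$ anti-Hermitian, so $\sigma$ fixes each of $L,A_{x},A_{y}$. On the $\widetilde{\boldsymbol{\mathfrak{so}_3}}$ side, the matrices $j_{1}=e_{23}-xe_{32}$, $j_{2}=e_{13}-xe_{31}$, $j_{3}=e_{12}-e_{21}$ have real entries whenever the base parameter is real, so they are fixed by the matrix complex conjugation of Example \ref{ex4}. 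Hence the isomorphism (\ref{eq3.4}) carries the $\R[E]$-real form of $\g$ onto the $\R[x]$-real form of $\widetilde{\boldsymbol{\mathfrak{so}_3}}$. Since an antiholomorphic involution of a locally free $O_{\X}$-module is determined by its fixed-point real form together with the involution on the base, the two real structures on $\g$ coincide.

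For the group family, for any $g\in O(2,\C)$ the orthogonality relation $g^{T}g=I$ gives $g^{-1}=g^{T}$, hence $(g^{*})^{-1}=((\overline{g})^{T})^{-1}=\overline{g}$; so $\sigma_{O(2,\C)}$ reduces to ordinary complex conjugation of matrices, which agrees with the real structure on $\boldsymbol{O_{2}}$ in Example \ref{ex4}. Compatibility of the $\BK$-action on $\g$ with the two real structures is then automatic, since both involutions are induced from the same embedding $O(2,\C)\hookrightarrow\operatorname{Aut}(L^{2}(\mathbb{S}))$, and the identity $(gTg^{-1})^{*}=(g^{*})^{-1}T^{*}g^{*}$ gives $\sigma(g\cdot T)=\sigma_{O(2,\C)}(g)\cdot\sigma(T)$ after a short manipulation.

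The main obstacle is the routine but error-prone verification that $A_{x}$ and $A_{y}$ are anti-Hermitian: each is a second-order differential operator with non-constant coefficients on the punctured plane, and the adjoints of the products $x\partial_{yy}$ and $y\partial_{xy}$ produce first-order terms via $[\partial_{i},x_{j}]=\delta_{ij}$ that must cancel against the explicit $-\tfrac12\partial_{x}$ and $-\tfrac12\partial_{y}$ in Lemma \ref{lem1}. This cancellation is precisely the reason the quantum Runge--Lenz operators carry the half-derivative correction, and once it is in hand the compatibility of real structures reduces to the three bookkeeping steps above.
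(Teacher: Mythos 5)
Your verification is correct and is essentially the computation the paper leaves to the reader (the lemma is stated with an immediate \qed): the key points are that $L$, $A_x$, $A_y$ are formally skew-adjoint on $L^2(\R^2\setminus\{0\})$ --- with the $-\tfrac12\partial_x$, $-\tfrac12\partial_y$ terms cancelling the first-order contributions from the adjoints of $x\partial_{yy}$ and $y\partial_{xy}$, exactly as you note --- so they are fixed by $\sigma(T)=-T^*$ and match the real generators $j_1,j_2,j_3$, while $\sigma_{O(2,\C)}$ reduces to matrix conjugation and the identity $(gTg^{-1})^*=(g^*)^{-1}T^*g^*$ gives equivariance. The only step I would tighten is the base involution: rather than reading it off from $\sigma(H)=-H$, observe that $\sigma(p(H)L)=-L^*p(H)^*=\overline{p}(H)L$ because $H$ is Hermitian and commutes with $L$, so the induced involution on $\C[E]$ is $p\mapsto\overline{p}$ with fixed locus $\R$, which under $x=-E$ is exactly the conjugation of Example \ref{ex4}.
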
  
By Example \ref{ex4} we know that $\widetilde{\boldsymbol{\mathfrak{so}_3}}$ with its real structure  can be lifted to the family $\widetilde{\boldsymbol{SO_3}}$ with a compatible real structure. We summarize the results of this  section in the following theorem. 
 
 \begin{theorem}
For any $E\in \operatorname{Spec}(H) \subset \X^{\sigma_{\X}}$ the visible symmetry of the  Schr\"{o}dinger equation $H\psi=E\psi$  is  $\BK|_E^{\sigma_{\BK}}$, and the infinitesimal hidden symmetry is   $\g|_E^{\sigma_{\g}}$. Furthermore,   $\g^{\sigma_{\g}}$ can be lifted to a family of Lie groups  that  correspond to the  hidden symmetries. That is,  there is an algebraic family of complex algebraic groups $\BG$ over $\X$ with a real structure $(\sigma_{\X},\sigma_{\BG})$, such that, for every $E\in \X^{\sigma}$, \[\BG|_E^{\sigma_{\BG}}\simeq  \begin{cases}
SO(2,1), & E>0,\\
O(2)\ltimes \R^2, & E=0,\\
SO(3), & E< 0.
\end{cases}\] 
\end{theorem}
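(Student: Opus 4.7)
The plan is to assemble the theorem from identifications already in place. The first two assertions (visible and infinitesimal hidden symmetry) are essentially bookkeeping: the lemma of Section \ref{sec3.2} gives an isomorphism $(\g,\BK) \simeq (\widetilde{\boldsymbol{\mathfrak{so}_3}},\boldsymbol{O_2})$ of algebraic families of Harish-Chandra pairs, and the lemma of Section \ref{sec3.3} guarantees that this isomorphism intertwines the two real structures. Under the correspondence $x \leftrightarrow -E$ of (\ref{eq3.4}), the fiberwise description of $\widetilde{\boldsymbol{\mathfrak{so}_3}}^{\sigma}$ displayed after Example \ref{ex4} immediately yields $\g|_E^{\sigma_{\g}}\simeq \mathfrak{so}(3)$ for $E<0$, $\mathfrak{so}(2)\ltimes\R^2$ for $E=0$, and $\mathfrak{so}(2,1)$ for $E>0$, reproducing the list of infinitesimal hidden symmetries recalled in the Introduction. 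For the visible symmetry, every fiber of the constant family $\BK$ is $O(2,\C)$ with $\sigma_{\BK}$-fixed points $O(2)$, and this $O(2)$ is precisely the standard rotation/reflection action on $\R^2\setminus\{0\}$ under which $H$ is manifestly invariant.

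For the lifting statement I would take $\BG := \widetilde{\boldsymbol{SO_3}}$ transported along (\ref{eq3.4}), equipped with the real structure from Example \ref{ex4}. Two compatibilities then need verifying. First, the associated family of Lie algebras $\operatorname{Lie}(\BG)$ must recover $\g$; by functoriality of $\operatorname{Lie}$ (see \cite[Sec.~2.2.1]{Ber2016}) it suffices to recall that $\widetilde{\boldsymbol{\mathfrak{so}_3}}$ was introduced in Example \ref{ex2} as $\operatorname{Lie}(\widetilde{\boldsymbol{SO_3}})$, and that Proposition \ref{propos1} identifies $\g$ with $\widetilde{\boldsymbol{\mathfrak{so}_3}}$. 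Second, the differential of $\sigma_{\BG}$ must agree with $\sigma_{\g}$; this is automatic because both are induced by ordinary complex conjugation of matrix entries, via the compatibility lemma of Section \ref{sec3.3}. The claimed fiber isomorphisms then fall out by reading Example \ref{ex4} under the substitution $x=-E$: $x>0$ (that is, $E<0$) gives $SO(3)$, $x=0$ gives $O(2)\ltimes\R^2$, and $x<0$ (that is, $E>0$) gives $SO(2,1)$.

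The only point where real work is hidden is in Example \ref{ex2} itself: one must verify that $\widetilde{\boldsymbol{SO_3}}$, specified only fiberwise, actually forms a smooth group scheme over $\X$, and that its fiber at $x=0$ genuinely is the contracted group $O(2,\C)\ltimes \C^2$ rather than degenerating to something of the wrong dimension. The excerpt delegates this to calculations parallel to those in \cite{Bar2017}. The difficulty is that the diagonal form $D_x=\operatorname{diag}(x,x,1)$ becomes degenerate at $x=0$, so one cannot simply present $\widetilde{\boldsymbol{SO_3}}$ as the orthogonal group scheme of a quadratic form over $\C[x]$; instead one realizes it as a closed subscheme of $\boldsymbol{SL_3(\C)}$ cut out by the equations read off the generic fiber and verifies flatness and smoothness uniformly in $x$ (e.g.\ via the Jacobian criterion), so that the fiber dimensions are constant across $\X$. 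Once this structural fact is granted, the passage to real fixed points is elementary and all three clauses of the theorem follow at once.
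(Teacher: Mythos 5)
Your proposal is correct and follows essentially the same route as the paper, which offers no separate proof but presents the theorem as a summary of Section \ref{sec3}: the identification $(\g,\BK)\simeq(\widetilde{\boldsymbol{\mathfrak{so}_3}},\boldsymbol{O_2})$ from Proposition \ref{propos1} and the lemmas of Sections \ref{sec3.2}--\ref{sec3.3}, compatibility of real structures, and the lift to $\BG=\widetilde{\boldsymbol{SO_3}}$ via Examples \ref{ex2} and \ref{ex4}, with the fibers read off under $x\leftrightarrow -E$. Your added remarks on checking $\operatorname{Lie}(\BG)\simeq\g$, the differential of $\sigma_{\BG}$, and the smoothness/flatness of $\widetilde{\boldsymbol{SO_3}}$ at $x=0$ correctly locate the one piece of work the paper delegates to \cite{Bar2017}.
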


\section{The family of Harish-Chandra modules  and hidden symmetries }
In this section  we show how the physical realization of the family $(\g,\BK)$ induces an essentially unique family of  $(\g,\BK)$-modules from which one can recover the solution spaces (as unitary representations) of the Schr\"{o}dinger equation.

\subsection{Algebraic families of Harish-Chandra modules}\label{sec4.1}
Let $(\g,\BK)$ be the  algebraic  family of Harish-Chandra pairs introduced in Section \ref{sec3}. Roughly speaking, an algebraic family of Harish-Chandra modules for $(\g,\BK)$ (or algebraic family of  $(\g,\BK)$-modules) is a family of complex vector spaces parameterized by  $\X$ that carries  compatible actions of $\g$ and $\BK$.  More precisely, an  algebraic family of Harish-Chandra modules is 
a flat quasicoherent $O_{\X}$-module, $\F$, that is  equipped with compatible actions of $\g$ and $\BK$, see \cite[Sec. 2.4]{Ber2016}. 
Since, in our case,  $\X$ is the affine variety $\C$, $\F$ can be identified with its space of global sections and can be considered as  a $\C[E]$-module that carries a representation of $\g$ as a Lie algebra over $\C[E]$ and  a compatible action of $K=O(2,\C)$, the fiber of the constant family $\BK$. We shall freely change our perspective between sheaves of $O_{\X}$-modules and $\C[E]$-modules with no  further warning.  
Since $\BK$  is a constant family,  there is a canonical decomposition of $\F$ into $K$-isotopic subsheaves (or submodules)
\[\F=\oplus_{\tau\in \widehat{K}}\F_{\tau} .\] 
We  call $\tau\in \widehat{K}$ with nonzero   $\F_{\tau} $ a $K$-type of $\F$.  
We shall only consider families  that are quasi-admissible and generically irreducible (see  \cite[Sec. 2.2 $\&$ 4.1]{Ber2016}). For such families,
 each $\F_{\tau}$ is a free  $K$-equivariant $O_{\X}$-module of finite rank. In simpler terms, the space of global sections of each $\F_{\tau}$ is isomorphic to $\C[E]\otimes_{\C}V_{\tau}$,  with $V_{\tau}$ isomorphic to a finite direct sum of copies of the representation $\tau$. The number  of  summands in this direct sum is called the multiplicity of the $K$-type $\tau$ in $\F$. 
In addition, for almost any $x\in \X$ the fiber $\F|_x$ is an irreducible admissible  $(\g|_x,\BK|_x)$-module. For  every $x\in\mathbb{C}^*\subset  \X$, the Harish-Chandra pair  $(\g|_x,\BK|_x)$ is isomorphic to the pair arising from $SO(2,1)$.  That is, the pair $(\mathfrak{so}(2,1)_{\C},O(2,\C))$, where  $\mathfrak{so}(2,1)_{\C}$  is  the complexification of $\mathfrak{so}(2,1)$, and $O(2,\C)$ sits, as before, block diagonally inside  $SL(3,\C)$. 
 The classification of the  irreducible admissible $(\mathfrak{so}(2,1)_{\C},O(2,\C))$  is well known.   For example, it can be deduced  
 from the  classification of the  irreducible admissible $(\mathfrak{sl}_2(\C),SO(2,\C))$-modules given  in \cite[ch.1]{Vogan81}, (or \cite[sec. II. 1]{HoweTan} or \cite[ch. 8]{Taylor86})  combined with Clifford theory \cite{Clifford}. Or it can be directly deduced from \cite{Naimark}.
This determines the possible lists of $K$-types for  generically irreducible and quasi-admissible families of $(\g,\BK)$-modules. Such a list of $K$-types is an invariant of an isomorphism class of  families of $(\g,\BK)$-modules. Using this  and several more invariants, 
the  classification of generically irreducible and quasi-admissible families of Harish-Chandra modules  for a closely related family of Harish-Chandra pairs was given in \cite[Sec. 4]{Ber2016}.
By the same methods one can classify generically irreducible and quasi-admissible families of $(\g,\BK)$-modules. The relation between the family of Harish-Chandra pairs in   \cite{Ber2016} and the family $(\g,\BK)$ introduced in Section \ref{sec3},  is analogous to the relation between $SU(1,1)$ and $SO(2,1)$. More precisely, the families of Lie algebras are isomorphic while the  fibers of the  two constant families of groups are related by quotient by a two element subgroup and extension by a two element group. We note that similar families of representations were studied in \cite{Adams2017}.

 Before we state the needed ingredients  from the mentioned classification, we shall describe the center of the enveloping algebra of $\g$. Let $\mathcal{U}(\g)$ be the universal enveloping algebra of $\g$. By Poincar\'e-Birkhoff-Witt theorem, as a module over $\C[E]$ we can write
\[\mathcal{U}(\g)=\oplus_{i,j,k\in \mathbb{N}_0}  \C[E] {L}^i{A}_x^j{A}_y^k.\]
By direct calculation, $\mathcal{Z}(\g)$, the center of $\mathcal{U}(\g)$, is a free polynomial algebra over $\C[E]$ with one generator  \[\Omega:={A}_x^2+{A}_y^2-E{L}^2\] which we call \textit{the regularized Casimir}.  On a generically irreducible and quasi-admissible family of $(\g,\BK)$-modules, the regularized Casimir  $\Omega$ must act via multiplication by a function $\omega(E)\in \C[E]$, see \cite[Sec. 4.4]{Ber2016}. The function $\omega(E)$ is another  invariant of a generically irreducible and quasi-admissible family of $(\g,\BK)$-modules. The following proposition describes a class of families of $(\g,\BK)$-modules that are completely determined by the above mentioned  invariants. 
\begin{proposition}
Let $\tau$ be  an irreducible algebraic representation of  $O(2,\C)$. Up to an isomorphism, a generically irreducible and quasi-admissible family of $(\g,\BK)$-modules $\F$ that is generated by its $\tau$ isotopic piece $\F_{\tau}$,  is  determined by $\omega(E)$.\label{prop2}
\end{proposition}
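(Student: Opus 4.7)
The plan is to identify $\F$ with the canonical irreducible quotient of a universal Verma-type family depending only on $\tau$ and $\omega(E)$, and to deduce uniqueness from the classical theory of the generic fibers together with the quasi-admissibility hypothesis. Given two families $\F,\F'$ satisfying the hypotheses, I will construct surjections from a common universal module, compare their kernels generically, and globalize using torsion-freeness.

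\textbf{Step 1 (universal module).} Let $\fk:=\C[E]\,L\subset\g$ be the Lie algebra of the constant family $\BK$, and let $V_\tau:=\C[E]\otimes_\C\tau$ be the associated free $\C[E]$-module with its natural $(\fk,\BK)$-equivariant structure (with $L$ acting by the differential of $\tau$). Define
\[
M(\tau,\omega):=\bigl(\mathcal{U}(\g)\otimes_{\mathcal{U}(\fk)}V_\tau\bigr)\Big/\bigl(\Omega-\omega(E)\bigr)\bigl(\mathcal{U}(\g)\otimes_{\mathcal{U}(\fk)}V_\tau\bigr).
\]
By Poincar\'e--Birkhoff--Witt over $\C[E]$ applied to the basis $L,A_x,A_y$, this is a well-defined family of $(\g,\BK)$-modules carrying a canonical $(\fk,\BK)$-equivariant embedding $V_\tau\hookrightarrow M(\tau,\omega)$, on which $\Omega$ acts by $\omega(E)$. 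By Frobenius reciprocity, any $(\g,\BK)$-module $\F$ on which $\Omega$ acts as $\omega(E)$ and containing a copy of $V_\tau$ as its $\tau$-isotypic piece admits a unique $(\g,\BK)$-equivariant extension $\pi_\F\colon M(\tau,\omega)\to\F$ of the inclusion; the generation hypothesis makes it surjective.

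\textbf{Step 2 (generic uniqueness of kernels).} Suppose $\F$ and $\F'$ both satisfy the hypotheses. Generic irreducibility and the classification of irreducible admissible $(\mathfrak{sl}_2(\C),O(2,\C))$-modules (recalled in Section \ref{sec4.1}) force the multiplicity of $\tau$ in each to be one, so $\F_\tau,\F'_\tau$ are free of rank $\dim\tau$. Choose $K$-equivariant identifications $V_\tau\simeq\F_\tau$ and $V_\tau\simeq\F'_\tau$ and form the surjections $\pi_\F,\pi_{\F'}\colon M(\tau,\omega)\twoheadrightarrow\F,\F'$ with kernels $K_\F,K_{\F'}$. At any generic $E_0\in\X$, the fiber $M(\tau,\omega)|_{E_0}$ is the generalized Verma module for $(\mathfrak{sl}_2(\C),O(2,\C))$ induced from $V_\tau|_{E_0}$ with central character $\omega(E_0)$; by the classical structure theory this module admits a unique irreducible quotient. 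Since both $\F|_{E_0}$ and $\F'|_{E_0}$ are irreducible quotients of this fiber, they coincide, and so $K_\F|_{E_0}=K_{\F'}|_{E_0}$.

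\textbf{Step 3 (globalization).} The quotient $(K_\F+K_{\F'})/K_{\F'}$ is a $(\g,\BK)$-submodule of $\F'\cong M(\tau,\omega)/K_{\F'}$; by Step 2 its generic fiber vanishes, so it is a torsion $\C[E]$-submodule of $\F'$. Quasi-admissibility forces each isotypic component $\F'_{\tau'}$ to be a free $\C[E]$-module, hence torsion-free, so $(K_\F+K_{\F'})/K_{\F'}=0$ and $K_\F\subseteq K_{\F'}$. By symmetry $K_{\F'}\subseteq K_\F$, hence $K_\F=K_{\F'}$ and $\F\cong M(\tau,\omega)/K_\F\cong\F'$, depending only on $\tau$ and $\omega(E)$.

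\textbf{Main obstacle.} The delicate step is Step 3, promoting the generic equality of kernels to a global one. The essential input is quasi-admissibility, which provides torsion-freeness of each $K$-isotypic component and rules out ``hidden'' fiberwise discrepancies at special values of $E$; without this rigidity, two families could agree at every generic point yet differ over the singular locus. A related technical point is verifying that the generic fiber of $M(\tau,\omega)$ really is a standard generalized Verma module with unique irreducible quotient, which reduces via the isomorphism of Proposition \ref{propos1} to the well-known structure theory of $(\mathfrak{sl}_2(\C),O(2,\C))$-modules.
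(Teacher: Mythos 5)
The paper itself disposes of this proposition in one line, by citing the classification theorem \cite[Thm.~4.9.3]{Ber2016} together with the classification of irreducible admissible $(\mathfrak{sl}_2(\C),SO(2,\C))$-modules; your attempt to give a direct argument via a universal induced module is therefore a genuinely different (and more self-contained) route. Steps 1 and 3 are sound: Frobenius reciprocity does give the surjections $\pi_{\F},\pi_{\F'}$, and the passage from generic vanishing of $(K_{\F}+K_{\F'})/K_{\F'}$ to global vanishing via torsion-freeness of the isotypic components is exactly the right use of quasi-admissibility. Moreover, when $\tau$ is one of the one-dimensional representations $\chi_0^{\pm}$ your Step 2 is also correct, for a slightly better reason than the one you give: every proper $(\g,\BK)$-submodule of $M(\tau,\omega)|_{E_0}$ must intersect the one-dimensional generating $\tau$-space trivially, and since all $SO(2,\C)$-weight spaces of $M(\tau,\omega)|_{E_0}$ are then one-dimensional, the sum of all proper submodules is still proper, so there is a unique maximal proper submodule and hence a unique irreducible quotient, at every $E_0$, not just generic ones.

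The gap is in Step 2 for the two-dimensional types $\tau=\chi_n\oplus\chi_{-n}$, $n\geq 1$. There the PBW computation shows that $M(\tau,\omega)|_{E_0}$ has every $SO(2,\C)$-weight with multiplicity \emph{two}: it is generically a direct sum $W_+\oplus W_-$ of two copies of the irreducible multiplicity-free module $W$ with Casimir $\omega(E_0)$, with $s$ interchanging the summands. Such a module does not have a unique irreducible quotient: the $\mathfrak{sl}_2$-submodules isomorphic to $W$ form a $\mathbb{P}^1$ of graphs $\Gamma_c=\{w+c\phi(w)\}$, the involution induced by $s$ fixes exactly two of them, and each of these two $s$-stable graphs is a maximal proper $(\g,\BK)$-submodule meeting $V_\tau$ trivially (their sum is all of $M(\tau,\omega)|_{E_0}$). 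So there are two irreducible quotients with \emph{different} kernels, and your inference ``both fibers are irreducible quotients, hence they coincide, hence the kernels agree'' breaks down: even if the two quotients happen to be abstractly isomorphic, equality of $K_{\F}|_{E_0}$ and $K_{\F'}|_{E_0}$ inside $M(\tau,\omega)|_{E_0}$ does not follow. Nor can you absorb the discrepancy into the choice of identification $V_\tau\simeq\F'_\tau$: since $\tau$ is irreducible, the only such $(\fk,\BK)$-automorphisms are scalars, which act trivially on the set of submodules. To close the gap you would need an additional argument that the two $s$-stable graphs give isomorphic \emph{families} of quotients (for instance by exhibiting an automorphism of the family $M(\tau,\omega)$ over $\C[E]$, or an outer symmetry of $(\g,\BK)$, interchanging them), or else retreat to the multiplicity-free case $\tau=\chi_0^{\pm}$, which is the only case actually used in Sections \ref{sec4.3}--\ref{apen}.
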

The proposition follows from  \cite[Thm. 4.9.3]{Ber2016} and the classification of admissible irreducible $(\mathfrak{sl}_2(\C),SO(2,\C))$-modules. In Section \ref{sec4.3}  we shall be interested in a specific case of such families for which we explicitly show how  $\omega(E)$ determines $\F$.

\subsection{The  families of Harish-Chandra modules imposed by the physical realization}
\label{sec4.2} 
The family $\g$ is more than just an abstract  algebraic family of Lie algebras, it is a family that is given by a concrete realization. The realization is induced by the realization of  the Schr\"{o}dinger operator $H$ as  a differential operator on smooth complex valued functions on $\mathbb{S}=\R^2\setminus \{0\}$.
As such, there are (algebraic) relations between  the elements of  $C_{\mathbb{D}}(H)$. For our purposes,  the relevant  relation is given by 
\begin{eqnarray}\label{Casimir}
A_x^2+A_y^2+\frac{1}{2}k^2=H\left(L^2-\frac{1}{4}\right). 
\end{eqnarray}
This can be verified directly. In fact, this relation is nothing but the Casimir relation for the quantum superintegrable system $E18$ that is determined by $H$.
Now, on (the $SO(2)$-finite functions in) the solution space for the Schr\"{o}dinger equation with eigenvalue $E$, realized  as a space of functions on $\R^2\setminus \{ 0\}$,   the Lie algebra  $\g|_E$ naturally acts via  its realization as differential operators on functions on $\R^2\setminus \{ 0\}$: 
\begin{eqnarray}\nonumber
&& L=y\partial_x-x\partial_y, \\ \nonumber
&& A_y=\frac{i}{\sqrt{2}}\left(y\partial_{xx}-x\partial_{xy}-\frac{1}{2}\partial_{y}+k\frac{y}{r}\right),\\ \nonumber
&& A_x= \frac{i}{\sqrt{2}}\left(x\partial_{yy}-y\partial_{xy}-\frac{1}{2}\partial_{x}+k\frac{x}{r}\right).
\end{eqnarray}
This induces a morphism of algebras  $\mathcal{U}(\g)\longrightarrow C_{\mathbb{D}}(H)$. 
Hence,  (\ref{Casimir}) implies that $\Omega={A}_x^2+{A}_y^2-E{L}^2$ acts via multiplication by the function $\omega(E)=-\frac{E}{4}-\frac{k^2}{2}$. 
 As noted above, one can classify all generically irreducible and quasi-admissible  families of $(\g,\BK)$-modules. These calculations follow from those in \cite[Sec. 4]{Ber2016}. We shall only need the following fact which can be derived from the  classification.
\begin{fact}\label{fact}
Let  $\F$ be a generically irreducible and quasi-admissible  family of $(\g,\BK)$-modules  on which $\Omega$ acts by multiplication by  $\omega(E)=-\frac{E}{4}-\frac{k^2}{2}$.  Then  any irreducible algebraic  representation of $SO(2,\C)$ is an  $SO(2,\C)$-type of $\F$,  and each appears with multiplicity one.     
\end{fact}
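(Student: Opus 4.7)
The plan is to reduce the statement about the entire family $\F$ to one about a single generic fiber, and then analyze that fiber by combining a ladder-operator calculation with the classification of irreducible admissible $(\mathfrak{sl}_2(\C), O(2,\C))$-modules recalled in Section \ref{sec4.1}.

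First, because $\F$ is flat and quasi-admissible over $O_{\X}=\C[E]$ (a PID), each $K$-isotypic sheaf $\F_\tau$ is a free $O_{\X}$-module of some finite rank $m_\tau$, which coincides with the $K$-multiplicity of $\tau$ in every fiber; the same fiber-independence propagates to the finer $SO(2,\C)$-isotypic decomposition. So it suffices to show that each irreducible algebraic $SO(2,\C)$-representation appears with multiplicity one in a single generic fiber $\F|_{E_0}$. Choose $E_0 \ne 0$ in the (nonempty) Zariski-open set where $\F$ is irreducible; then under $\g|_{E_0}\simeq \mathfrak{so}(3,\C)\simeq \mathfrak{sl}_2(\C)$ from Proposition \ref{propos1}, the fiber $\F|_{E_0}$ is an irreducible admissible $(\mathfrak{sl}_2(\C), O(2,\C))$-module, and the cited classification guarantees that every $SO(2,\C)$-character appears there with multiplicity at most one.

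To see that every integer character actually appears, I introduce $M_{\pm}:=A_x\pm iA_y$. From \eqref{cm} one gets $[L,M_{\pm}] = \mp i M_{\pm}$, so $M_{\pm}$ shift the $SO(2,\C)$-weight by $\pm 1$. A direct computation gives
\[
\Omega \;=\; M_+ M_- - iEL - EL^2 \;=\; M_- M_+ + iEL - EL^2,
\]
so on the $\chi_m$-isotypic piece $\F_m|_{E_0}$ (where $L$ acts as $-im$) the operators $M_{\pm}M_{\mp}$ act by the scalars $\omega(E_0)-E_0\, m(m\mp 1)$. Substituting $\omega(E)=-E/4-k^2/2$, each such scalar is a polynomial in $E_0$ with nonzero constant term $-k^2/2$ and thus vanishes for only finitely many values of $E_0$. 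Since $\Z$ is countable while $\X=\C$ is uncountable, a generic $E_0$ avoiding all of these exceptional sets can be selected, making $M_{\pm}:\F_m|_{E_0}\to\F_{m\pm 1}|_{E_0}$ injective for every $m\in\Z$. Starting from any nonzero $\F_{m_0}|_{E_0}$ (which exists since $\F|_{E_0}\ne 0$) and iterating $M_+$ and $M_-$ then yields $\F_m|_{E_0}\ne 0$ for every $m\in\Z$, settling the fiberwise claim.

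The main obstacle is the invocation of the classification of irreducible admissible $(\mathfrak{sl}_2(\C), O(2,\C))$-modules, which combines the classical $(\mathfrak{sl}_2(\C), SO(2,\C))$ classification with a Clifford-theoretic analysis of the $O(2,\C)$-action; this is cited from Section \ref{sec4.1} rather than reproved here. A secondary technical point is the passage between $O(2,\C)$-isotypic and $SO(2,\C)$-isotypic multiplicities, which is resolved by observing that the reflection $\widetilde s$ of Example \ref{ex3} conjugates $\F_m$ onto $\F_{-m}$ (since $\widetilde s \cdot L = -L$), so that the relevant $SO(2,\C)$-multiplicities are determined by the $O(2,\C)$-multiplicities in a transparent way.
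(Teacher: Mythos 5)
Your proof is correct, but it is more self-contained than what the paper actually does: the paper states Fact \ref{fact} without proof, deriving it from the classification of generically irreducible quasi-admissible families in \cite[Sec. 4]{Ber2016} (which in turn rests on the classification of irreducible admissible $(\mathfrak{sl}_2(\C),SO(2,\C))$-modules recalled in Section \ref{sec4.1}). What you do differently is prove the ``every weight occurs'' half directly by a ladder-operator computation at a single generic fiber, after correctly reducing multiplicities of the whole family to those of one fiber via freeness of the isotypic sheaves over the PID $\C[E]$. Your formulas check out: with $M_\pm=A_x\pm iA_y$ one indeed has $\Omega=M_+M_--iEL-EL^2=M_-M_++iEL-EL^2$, and on the weight-$m$ space $M_\pm M_\mp$ acts by $\omega(E_0)-E_0m(m\mp1)=-\tfrac{k^2}{2}-E_0\left(m\mp\tfrac12\right)^2$, which is a nonconstant affine function of $E_0$ with nonzero constant term, so a generic $E_0$ avoids all (countably many) zeros and the raising/lowering maps are injective; this is exactly the computation the paper performs later in the proof of Theorem \ref{Theo2}, so your argument in effect front-loads that calculation to establish the Fact. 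Two small remarks: (i) you could dispense with the appeal to the classification entirely, since once all the scalars $M_\pm M_\mp$ are nonzero, any nonzero weight vector generates a subspace meeting each weight space in a line (because $M_+M_-$ and $M_-M_+$ act by scalars on weight spaces, every word in $M_\pm$ collapses to a multiple of a pure power), and irreducibility of the generic fiber then forces multiplicity exactly one --- this would make the proof entirely elementary; (ii) your parenthetical that the Zariski-open irreducibility locus is merely ``nonempty'' undersells what you need --- it is cofinite in $\C$, which is why it still meets the complement of your countable exceptional set. What your approach buys is transparency and independence from the external classification; what the paper's citation buys is brevity and consistency with the uniqueness statement of Proposition \ref{prop2}, which genuinely does require the classification.
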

A point $E\in \X$ such that $\F|_E$ is reducible is called a \textit{reducibility point} of $\F$. We are now ready to calculate the bounded spectrum of the  Schr\"{o}dinger operator.
\begin{theorem}\label{Theo2}
Let  $\F$ be a generically irreducible and quasi-admissible  family of $(\g,\BK)$-modules  on which $\Omega$ acts via multiplication by  $\omega(E)=-\frac{E}{4}-\frac{k^2}{2}$.   The collection of all the reducibility points of $\F$ coincides with $\mathcal{E}_b$.
\end{theorem}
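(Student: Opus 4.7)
My approach is to exploit the $SO(2,\C)$-weight space decomposition of $\F$ furnished by Fact~\ref{fact}, reduce the problem to a recursion on the structure constants of raising and lowering operators, and then read off the reducibility points directly from the Casimir identity.

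First, using Fact~\ref{fact} and quasi-admissibility, I would decompose $\F=\bigoplus_{m\in\Z}\F_m$ where each $\F_m$ is a free $\C[E]$-module of rank one, and pick a generator $v_m\in\F_m$. Introduce $A_\pm:=A_x\pm iA_y$; from $[L,A_x]=A_y$, $[L,A_y]=-A_x$, $[A_x,A_y]=-EL$ one obtains $[L,A_\pm]=\mp iA_\pm$ and $[A_+,A_-]=2iEL$, so $A_\pm$ shift the $SO(2,\C)$-weight by $\pm1$. Write $A_+v_m=\alpha_m v_{m+1}$ and $A_-v_m=\beta_m v_{m-1}$ with $\alpha_m,\beta_m\in\C[E]$, and set $c_m:=\alpha_m\beta_{m+1}\in\C[E]$.

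Second, I compute the action of $\Omega=\tfrac12(A_+A_-+A_-A_+)-EL^2$ on $v_m$. Using $Lv_m=-imv_m$ (coming from $L=y\partial_x-x\partial_y=-\partial_\theta$), the identities $\Omega v_m=\omega(E)v_m$ and $[A_+,A_-]v_m=2iELv_m=2Emv_m$ give the pair
\[c_{m-1}+c_m=2\omega(E)-2Em^2,\qquad c_{m-1}-c_m=2Em,\]
whose unique solution is $c_m(E)=\omega(E)-Em(m+1)=-E(m+\tfrac12)^2-\tfrac{k^2}{2}$.

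Third, I argue that $\F|_{E_0}$ is reducible if and only if $c_m(E_0)=0$ for some $m\in\Z$. If every $c_m(E_0)$ is nonzero, any $(\g|_{E_0},\BK|_{E_0})$-submodule containing a weight vector $v_m|_{E_0}$ must contain all weight vectors by repeated application of $A_\pm$, forcing irreducibility. Conversely, the vanishing of $c_m(E_0)$ forces $\alpha_m(E_0)=0$ or $\beta_{m+1}(E_0)=0$, cutting the chain between weights $m$ and $m+1$; the symmetry $c_m=c_{-m-1}$ together with the fact that $s\in O(2,\C)$ swaps $A_+\leftrightarrow A_-$ and $v_k\leftrightarrow v_{-k}$ up to scalars produces a matching break between weights $-m-1$ and $-m$, so $\mathrm{span}\{v_k|_{E_0}:-m\le k\le m\}$ (or its complement) is a proper $(\g|_{E_0},\BK|_{E_0})$-submodule. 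Solving $c_m(E)=0$ yields $E=-k^2/\bigl(2(m+\tfrac12)^2\bigr)$; as $m$ ranges over $\Z$, these values form precisely the set $\{-k^2/(2(n+\tfrac12)^2):n=0,1,2,\ldots\}=\mathcal{E}_b$.

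The main technical obstacle is the $O(2,\C)$-stability in the converse step. While $\g$-stability of the candidate span is immediate from the vanishing of $c_m$, showing that the two breaks induced by the identity $c_m=c_{-m-1}$ pair up compatibly under $s$ requires a careful normalization of the $v_m$'s together with the constants relating $s\cdot v_k$ to $v_{-k}$ and $s\cdot A_+$ to $A_-$. Once this compatibility is settled, the resulting $(2m+1)$-dimensional submodule at $E=E_m$ should coincide with the physical bound-state space $Sol(E_m)$ carrying the irreducible $SO(3)$-representation of the hidden symmetry.
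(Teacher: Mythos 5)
Your proposal is correct and follows essentially the same route as the paper: decompose $\F$ into rank-one $SO(2,\C)$-weight modules via Fact~\ref{fact}, use the Casimir together with $[A_+,A_-]$ to pin down the products $c_m=-E\left(m+\tfrac12\right)^2-\tfrac{k^2}{2}$ of the raising/lowering structure constants, and identify the reducibility points with the zeros of the $c_m$, which is exactly $\mathcal{E}_b$. The $O(2,\C)$-stability issue you flag in the converse direction (which the paper's proof glosses over) needs no careful normalization: a break produces a proper nonzero $\g|_{E_0}$-stable span $W$ of consecutive weight vectors, and since $s$ normalizes $\g|_{E_0}$ and permutes the weight spaces by $k\mapsto -k$, one of $W\cap sW$ or $W+sW$ is automatically a proper nonzero $(\g|_{E_0},\BK|_{E_0})$-submodule.
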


Before proving Theorem \ref{Theo2} we   recall the natural parameterization of $\widehat{SO(2,\C)}$,  the set of equivalence classes of irreducible algebraic representations of $SO(2,\C)$, and  do the same for  ${O(2,\C)}$.  For every $n\in \Z$, the formula $\chi_n(R(\theta))=e^{in\theta}$   defines a one-dimensional irreducible algebraic representation of $O(2,\C)$, and up to equivalence these are all such representations. This identifies $\Z$ with $\widehat{SO(2,\C)}$.  For $n\in \mathbb{N}$, the $SO(2)$ representation $\chi_n\oplus \chi_{-n}$  can be turned into an irreducible representation of $O(2,\C)$ by defining the action of $s$ on the underlying vector space $\C^2$  by $s\cdot (z,w)=(w,z)$. In addition, the trivial representation of $SO(2,\C)$ can be extended in exactly two ways to an irreducible algebraic representation of $O(2,\C)$. One is the trivial representation of  $O(2,\C)$, denoted by  $\chi_0^+$, and the other is  the  determinant representation of $O(2,\C)$, denoted by $\chi_0^-$. The collection $\{\chi_0^+,\chi_0^-,\chi_n\oplus \chi_{-n}|n\in \mathbb{N}\}$ contains exactly one representative for each class in $\widehat{O(2,\C)}$.

\begin{proof}
Let $\F$ be a family satisfying the  hypotheses of Theorem \ref{Theo2}. By Fact \ref{fact},  there  exists  a sequence   $\{f_n|n\in \Z\} \subset 
 \F$ with   \begin{eqnarray}\nonumber
 && R(\theta)f_{n}=e^{in\theta}f_{n},
\end{eqnarray}  
and such that  the decomposition of $\F$ with respect to the action of $SO(2,\C)$ is given by 
 \[\F=\oplus_{n\in \mathbb{Z}}\F_{n}, \hspace{1cm} \F_n=\C[E]f_n. \]  
 Define a new basis for $\g$ by  $\mathcal{J}=i{L}$, $\mathcal{A}_{+}=\frac{1}{\sqrt{2}}({A}_x+i{A}_y)$, $\mathcal{A}_{-}=\frac{1}{\sqrt{2}}({A}_x-i{A}_y)$. In particular, for every $n\in \Z$, $\mathcal{J}f_n=nf_n$. The commutation relations of these basis elements are given   by 
\begin{eqnarray}
&&[\mathcal{J},\mathcal{A}_{+}]=\mathcal{A}_{+}, [\mathcal{J},\mathcal{A}_{-}]=-\mathcal{A}_{-}, [\mathcal{A}_+,\mathcal{A}_-]=E\mathcal{J}.\label{comm}
\end{eqnarray} 
In terms of the new basis, the regularized Casimir is given by  
\[ \Omega=E\mathcal{J}^2+E\mathcal{J}+2\mathcal{A}_-\mathcal{A}_+=E\mathcal{J}^2-E\mathcal{J}+2\mathcal{A}_+\mathcal{A}_- .\]
Hence, for every $n\in \Z$ we have    
\begin{eqnarray}\nonumber
&& \mathcal{A}_-\mathcal{A}_+f_{n}=\frac{1}{2}\left(\omega(E)-E(n^2+n)\right) f_{n}=-\frac{1}{2}\left(\frac{k^2}{2} +E\left(n+\frac{1}{2}\right)^2\right) f_{n},\\ \nonumber
&& \mathcal{A}_+\mathcal{A}_-f_{n}=\frac{1}{2}\left(\omega(E)-E(n^2-n)\right)f_{n}=-\frac{1}{2}\left(\frac{k^2}{2} +E\left(n-\frac{1}{2}\right)^2\right) f_{n},
\end{eqnarray} 
and  the reducibility points are given by   $\left\{E_n:=-\frac{k^2}{2(n+1/2)^2}\middle|n \in \{0,1,2,...\} \right\}$ which is nothing but $\mathcal{E}_b$.
\end{proof}
\subsection{Concrete  families  of Harish-Chandra modules}\label{sec4.3}
There are many  generically irreducible and quasi-admissible  families of $(\g,\BK)$-modules on which $\Omega$ acts via multiplication by  $\omega(E)=-\frac{E}{4}-\frac{k^2}{2}$.  The analysis that follows can be applied to any  of them.  For concreteness, we shall focus on  such  families that are generated by their isotypic subsheaf corresponding to the trivial $SO(2,\C)$-type. Combining Proposition \ref{prop2} and Fact \ref{fact}, we see that there are at most  two such families. We shall see below that  there are exactly two such families. These two families must have the same $SO(2,\C)$-types but they differ by one $O(2,\C)$-type. One of them contains the trivial $O(2,\C)$-type and the other  contains the determinant $O(2,\C)$-type.  .

From now on, we let  $\F$ be a generically irreducible and quasi-admissible  family of $(\g,\BK)$-modules on which $\Omega$ acts via multiplication by  $\omega(E)=-\frac{E}{4}-\frac{k^2}{2}$  and that  is generated by $\F_0$, the isotypic piece corresponding to the trivial $SO(2,\C)$ representation.  In this case, all the $SO(2,\C)$-types appear, each has multiplicity one, and each $\F_n$ is a free rank-one $\C[E]$-module.  We choose a basis $f_0$ of $\F_0$ and define 
\[f_{n}:=\begin{cases}
(\mathcal{A}_+)^nf_0 & n>0,\\
(\mathcal{A}_-)^{-n}f_0 & n<0.
\end{cases} \]
The commutation relations (\ref{comm})  imply that $f_{n}\in \F_{n}$. Since $\F$ is generated by $\F_0$, all $f_{n}$ vanish nowhere. That is, at any $E\in \C$, each $f_{n}$ defines a nonzero vector in the fiber of $\F$ at $E$.  In particular, $\{f_{n}|n\in \Z\}$ is a basis for $\F$ and $\F_{n}=\C[E]f_{n}$.   This, together with the commutation relations (\ref{comm}), completely determines the action of $\g$ to be given by 
 \begin{eqnarray}\label{4.2}
 && \mathcal{J}f_{n}=nf_{n},\\ \label{4.3}
 && \mathcal{A}_+f_{n}=\begin{cases}
f_{n+1} & n\geq 0,\\
-\frac{1}{2}\left(\frac{k^2}{2} +E\left(n+\frac{1}{2}\right)^2\right)  f_{n+1} & n<0,
\end{cases}\\ \label{4.4}
 && \mathcal{A}_-f_{n}=\begin{cases}
-\frac{1}{2}\left(\frac{k^2}{2} +E\left(n-\frac{1}{2}\right)^2\right) f_{n-1} & n> 0,\\
f_{n-1} & n\leq 0.
\end{cases}
 \end{eqnarray}
 Of course, the action of $SO(2,\C)$ is  given by 
 \[R(\theta)f_{n}=e^{in\theta}f_{n}. \]
 In addition, one can show that  either  one of the two formulas (that only differ by a sign) 
  \begin{eqnarray}
&&s \cdot f_{n}=\pm(-i)^nf_{-n}\label{eq4.8}
\end{eqnarray}  
extends $\F$ to a generically irreducible and quasi-admissible algebraic  family of $(\g,\BK)$-modules over $\X$. The two families obtained  this way are not isomorphic.

\subsection{The Jantzen Filtration}\label{sec4.4}
In this section we recall  the $\sigma$-twisted dual, which is a certain dual family to $\F$. We calculate the space of intertwining operators from $\F$ to its $\sigma$-twisted dual, and we recall how such an intertwining operator gives rise to the Jantzen filtration of the fibers of $\F$.

\subsubsection{The $\sigma$-twisted dual}\label{sec4.4.1}
Using the real structure of $(\g,\BK)$, with any algebraic family $\F$   of $(\g,\BK)$-modules, we can associate another family of $(\g,\BK)$-modules, $\F^{\langle \sigma \rangle}$, \textit{the $\sigma$-twisted dual of $\F$}.  See \cite[Sec. 2.4]{Ber2017} for precise definition. 
For the two families $\F$ that were defined in Section \ref{sec4.3},  the $\sigma$-twisted dual has a basis $\{q_{n}|n\in \Z\}$ such that 
\[\F^{\langle \sigma \rangle}=\oplus_{n\in \mathbb{Z}}\F^{\langle \sigma \rangle}_{n}\hspace{2mm}\text{with}\hspace{4mm} \F^{\langle \sigma \rangle}_{n}=\C[E]q_{n}, \] 
and the action of  $(\g,\BK)$ is determined by 
\begin{eqnarray}\label{4.20}
 && \mathcal{J}q_{n}=nq_{n},\\ \label{4.30}
 && \mathcal{A}_+q_{n}= \begin{cases}
\frac{1}{2}\left(\frac{k^2}{2}+E\left(n+\frac{1}{2}\right)^2\right)q_{n+1} & n\geq 0,\\
-q_{n+1} & n< 0,
\end{cases}   \\ \label{4.40}
 && \mathcal{A}_-q_{n}=\begin{cases}
-q_{n-1} & n> 0,\\
\frac{1}{2}\left(\frac{k^2}{2}+E\left(n-\frac{1}{2}\right)^2\right)q_{n-1} & n\leq 0,
\end{cases}\\
 &&R(\theta)q_{n}=e^{in\theta}q_{n}, \hspace{8mm}s \cdot q_{n}=\pm(-i)^nq_{-n}. \label{eq4.12}
 \end{eqnarray}  
   \subsubsection{Intertwining operators } 
 In this section we describe the space of intertwining operators from $\F$ to $\F^{\langle \sigma \rangle}$.
 \begin{proposition}\label{propo3}
$\operatorname{Hom}_{(\g,\BK)}(\F,\F^{\langle \sigma \rangle})\simeq \C[E]$.
 \end{proposition}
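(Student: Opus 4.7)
The plan is to reduce an arbitrary intertwiner $T\colon \F\to \F^{\langle\sigma\rangle}$ to its action on the explicit bases $\{f_n\}$ and $\{q_n\}$, then deduce that it is pinned down by a single polynomial parameter. Because $\BK$ is the constant family $\X\times O(2,\C)$ and the $SO(2,\C)$-decompositions $\F=\bigoplus_n \C[E]f_n$ and $\F^{\langle\sigma\rangle}=\bigoplus_n \C[E]q_n$ are multiplicity-free with matching characters $\chi_n$, any $\BK$-equivariant $\Ox$-linear map $T$ must preserve the $SO(2,\C)$-weight decomposition, so there are unique $c_n(E)\in\C[E]$ with $T(f_n)=c_n(E)\,q_n$. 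Hence $T$ is encoded by the sequence $\{c_n(E)\}_{n\in\Z}$.

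Next I would impose $\g$-equivariance. Comparing $T(\mathcal{A}_+f_n)$ with $\mathcal{A}_+T(f_n)$, and $T(\mathcal{A}_-f_n)$ with $\mathcal{A}_-T(f_n)$, using formulas (4.3)-(4.4) on the source side and (4.30)-(4.40) on the target side, yields in each case a recursion linking $c_n$ and $c_{n\pm 1}$. For $n\ge 0$ the $\mathcal{A}_+$-condition gives
\begin{equation*}
c_{n+1}(E)=\tfrac{1}{2}\!\left(\tfrac{k^2}{2}+E\!\left(n+\tfrac{1}{2}\right)^{2}\right)c_n(E),
\end{equation*}
while for $n<0$ the same condition gives the identical relation read in the other direction; the $\mathcal{A}_-$-condition reproduces these same relations with the roles of $n\ge 0$ and $n\le 0$ interchanged. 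The apparently asymmetric placement of the scalar factors between the positive and negative halves of formulas (4.3)-(4.4) versus (4.30)-(4.40) is precisely what allows all four recursions to hold simultaneously over $\C[E]$ (rather than only over its fraction field) — a compatibility that is, in essence, a repackaging of the fact that $\Omega$ acts on both families by the same $\omega(E)$. Consequently every $c_n(E)$ is forced to be an explicit $\C[E]$-polynomial multiple of $c_0(E)$, and conversely any choice of $c_0(E)\in\C[E]$ extends uniquely to such a sequence.

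Finally I would check equivariance under the reflection $s\in O(2,\C)$. From (4.8) and (4.12) one has $s\cdot f_n=\pm(-i)^n f_{-n}$ and $s\cdot q_n=\pm(-i)^n q_{-n}$ with the same choice of sign (this is where one uses that the two families obtained in Section~\ref{sec4.3} are paired with their $\sigma$-twisted duals consistently), so the constraint reduces to $c_{-n}(E)=c_n(E)$. A short comparison of the closed-form products obtained from the recursion shows that $c_n(E)$ is symmetric under $n\mapsto -n$, so the $s$-equivariance is automatic. Altogether, the assignment $T\mapsto c_0(E)$ gives a $\C[E]$-linear bijection $\operatorname{Hom}_{(\g,\BK)}(\F,\F^{\langle\sigma\rangle})\xrightarrow{\sim}\C[E]$.

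The main obstacle is really just the bookkeeping: keeping the four recursions (two each from $\mathcal{A}_+$ and $\mathcal{A}_-$, split by the sign of $n$) aligned and checking that the scalar factors in (4.3)-(4.4) and (4.30)-(4.40) are consistent with one another. Once that is set up carefully the proposition is essentially immediate, and no abstract tool beyond the explicit module structure of Section~\ref{sec4.3} and the definition of the $\sigma$-twisted dual is needed.
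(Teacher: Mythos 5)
Your proposal is correct and follows essentially the same route as the paper's proof: use $\BK$-equivariance to reduce an intertwiner to scalars $c_n(E)$ with $T(f_n)=c_n(E)q_n$ and $c_{-n}=c_n$, then use $\g$-equivariance via the explicit formulas (\ref{4.3})--(\ref{4.4}) and (\ref{4.30})--(\ref{4.40}) to obtain the recursion forcing $c_n(E)=\frac{1}{2^{|n|}}c_0(E)\prod_{m=1}^{|n|}\bigl(\frac{k^2}{2}+E(m-\frac{1}{2})^2\bigr)$, so that $T\mapsto c_0(E)$ is the desired isomorphism. Your additional observations --- that the asymmetric placement of the scalar factors between $\F$ and $\F^{\langle\sigma\rangle}$ is what keeps the recursion inside $\C[E]$, and that the $s$-equivariance is automatic given consistent sign choices in (\ref{eq4.8}) and (\ref{eq4.12}) --- are accurate refinements of points the paper leaves implicit.
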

  
\begin{proof}
By definition, $\operatorname{Hom}_{(\g,\BK)}(\F,\F^{\langle \sigma \rangle})$ is the space of algebraic intertwining operators from $\F$ to $\F^{\langle \sigma \rangle}$. Any such operator is given by a morphism of $\C[E]$-modules  $\psi:\F\longrightarrow \F^{\langle \sigma \rangle}$  that is equivariant  with respect to the actions of $\g$ and $\BK$. The equivariance with respect to $\BK$ implies that for every $n\in \Z$, 
 \[\psi(f_{n})=\psi_{n}(E)q_{n} \]  for some $\psi_{n}\in \C[E]$, with   \[\psi_{-n}(E)=\psi_{n}(E). \] 
  The equivariance with respect to $\g$ further implies that  \[\psi( \mathcal{A}_{\pm}f_{n})= \mathcal{A}_{\pm}\psi_{n}(E)q_{n}, \hspace{1mm} \forall n\in \Z. \] 
 This leads to a recursion relation between the $\psi_n(E)$ and the explicit form of $\psi_n(E)$ is  given by 
 \begin{eqnarray}\label{eq411}
&& \psi_{n}(E)=\psi_{-n}(E)=\frac{1}{2^{|n|}}\psi_{0}(E)\Pi_{m=1}^{|n|}\left(\frac{k^2}{2}+E\left(m-\frac{1}{2}\right)^2\right).
\end{eqnarray} 
 Hence, the map that assigns to $\psi \in \operatorname{Hom}_{(\g,\BK)}(\F,\F^{\langle \sigma \rangle})$  the function $\psi_0(E)\in \C[E]$  is an isomorphism of $\C[E]$-modules. 
\end{proof}  
\subsubsection{The Jantzen filtration}
In this section we describe the Jantzen filtration. For more  information see \cite[Sec. 4.1]{Ber2017}. 
Let $\psi:\F\longrightarrow \F^{\langle \sigma \rangle}$  be a nonzero intertwining operator.  
For  a fixed $e \in \C$,  we can localize $\C[E]$ at $e$, obtaining the ring $\C[E]_e$ of all rational functions that are defined near $e$. Explicitly,  \[\C[E]_e=\left\{\frac{f}{g}| f,g\in \C[E], g(e)\neq 0\right\}. \] Similarly, we can form the localizations of $\F$ and  $\F^{\langle \sigma \rangle}$ at $e$, \[\F_e:=\C[E]_e\otimes_{\C[E]}\F,\hspace{5mm}  \F^{\langle \sigma \rangle}_e:=C[E]_e\otimes_{\C[E]}\F^{\langle \sigma \rangle}. \]  
The morphism $\psi$ induces a morphism of $\C[E]_e$-modules from $\F_e$ to  $ \F^{\langle \sigma \rangle}_e$ that, by abuse of notation, we shall also denote by $\psi$. 
Using $\psi$, we obtain a  decreasing filtration of  $\F_e$
defined by  
\begin{eqnarray}\nonumber
&& \F_e^n=\{f\in \F_e|\psi(f)\in (E-e)^n\F_e\},
\end{eqnarray}
for every $n\in \mathbb{N}_0=\{0,1,2,...\}$.
The fiber of $\F$  at $e$ is defined by
 \[\F|_e:=\C\otimes_{\C[E]}\F=\C\otimes_{ \C[E]_e}\F_e.\]
The natural surjection from the localization at $e$ to the fiber at $e$ gives rise to a decreasing  filtration $\{\F|_e^n\}$ of $\F|_e$. This  is the \textit{Jantzen filtration}.   The Jantzen quotients (of $\F|_e$) are $\F|^n_e/\F|^{n-1}_e$. Up to a shift of the filtration parameter $n$, the Jantzen quotients are independent of (a nonzero) $\psi$. We say that the filtration is trivial if $0 \neq \F|_e^n$ implies $\F|_e^n=\F|_e$.

  \begin{proposition}
Let  $\F$ and $\psi$  be as above and let $e\in \C$. The Jantzen filtration of $\F|_e$ is trivial   iff $e\in \C\setminus{\mathcal{E}_b}$. If $e\in \mathcal{E}_b$ there are exactly two nonzero quotients. Furthermore, for $e=e_m=-\frac{k^2}{2(m+1/2)^2}$  with $m\in \{0,1,2,...\}$,   one of the quotients is of dimension $2m+1$  with  $SO(2,\C)$-types $\{-m,-m+1,...,m \}$  and the other is infinite dimensional with  $SO(2,\C)$-types $\{l\in \Z| |l|>m \}$\hspace{1pt}.
\label{propos4}
\end{proposition}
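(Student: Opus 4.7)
My plan is to exploit the fact that in the bases $\{f_n\}_{n\in\Z}$ of $\F$ and $\{q_n\}_{n\in\Z}$ of $\F^{\langle\sigma\rangle}$ the intertwiner $\psi$ acts diagonally, $\psi(f_n) = \psi_n(E)\,q_n$, with $\psi_n(E)$ given by (\ref{eq411}). By Proposition~\ref{propo3} and the shift-invariance of the Jantzen quotients, I may normalize $\psi_0(E) = 1$, so $\psi_n(E) = 2^{-|n|}\prod_{j=1}^{|n|}(k^2/2 + E(j-1/2)^2)$. Since $\F = \bigoplus_n \C[E]f_n$ splits $\psi$ into rank-one summands, the condition $\psi(f)\in (E-e)^p\F_e$ decouples component-by-component, and the entire Jantzen filtration is controlled by the numbers $N_n(e) := \mathrm{ord}_{E=e}\psi_n(E)$. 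The proof thus reduces to a per-$n$ order-of-vanishing analysis.

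Next I would locate the zeros of the individual factors: $k^2/2 + E(j-1/2)^2$ vanishes at $E=e$ precisely when $e = -k^2/(2(j-1/2)^2)$, and as $j$ runs over $\{1,2,3,\dots\}$ these are exactly the points $e_{j-1}$ of $\mathcal{E}_b$ identified in Theorem~\ref{Theo2}. Hence for $e \in \C \setminus \mathcal{E}_b$ every $\psi_n$ is a unit at $e$, so $N_n(e) = 0$ for all $n$. The condition defining $\F_e^1$ then forces every coefficient in $f = \sum_n a_n(E) f_n$ to be divisible by $(E-e)$, giving $\F_e^1 \subseteq (E-e)\F_e$. Its image in $\F|_e$ vanishes, and the Jantzen filtration on $\F|_e$ collapses to the trivial one.

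At $e = e_m \in \mathcal{E}_b$, exactly one factor of the product, the one with $j = m+1$, vanishes at $e_m$ (because $(j-1/2)^2 = (m+1/2)^2$ forces $j = m+1$ when $j\ge 1$); this factor is present in $\psi_n(E)$ iff $|n| \ge m+1$. Consequently $N_n(e_m) = 1$ for $|n| > m$ and $N_n(e_m) = 0$ for $|n| \le m$. Unpacking the definition summand by summand gives $\F_e^0 = \F_e$, $\F_e^2 \subseteq (E-e_m)\F_e$, and $\F_e^1 = \bigoplus_{|n|\le m}(E-e_m)\C[E]_{e_m}f_n \,\oplus\, \bigoplus_{|n|>m}\C[E]_{e_m}f_n$. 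Pushing forward to $\F|_{e_m} = \F_e/(E-e_m)\F_e$ yields $\F|_{e_m}^0 = \F|_{e_m}$, $\F|_{e_m}^1 = \bigoplus_{|n|>m}\C\bar f_n$, and $\F|_{e_m}^2 = 0$. Because $R(\theta)\bar f_n = e^{in\theta}\bar f_n$, the two nonzero Jantzen quotients $\F|_{e_m}^0/\F|_{e_m}^1$ and $\F|_{e_m}^1$ are, respectively, spanned by $\{\bar f_n : |n|\le m\}$ (dimension $2m+1$, $SO(2,\C)$-types $\{-m,\dots,m\}$) and by $\{\bar f_n : |n|>m\}$ (infinite dimensional, $SO(2,\C)$-types $\{l\in\Z : |l|>m\}$), matching the stated description.

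The main obstacle is bookkeeping rather than substance: one must carefully justify the normalization $\psi_0 = 1$ (using both Proposition~\ref{propo3} and the shift-invariance of Jantzen quotients) and distinguish the submodule $\F_e^p\subseteq\F_e$ from its image $\F|_e^p\subseteq\F|_e$ when forming the quotients. Once the diagonal action of $\psi$ on the isotypic basis is exploited, the remainder is a direct computation.
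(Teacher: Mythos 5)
Your proposal is correct and follows essentially the same route as the paper: normalize $\psi_0\equiv 1$ using the independence of the quotients from the choice of nonzero intertwiner, observe that $\psi$ acts diagonally on the isotypic basis, locate the zeros of $\psi_n(E)=2^{-|n|}\prod_{j=1}^{|n|}(k^2/2+E(j-1/2)^2)$ at the points $e_{j-1}\in\mathcal{E}_b$, and read off the filtration fiberwise. Your order-of-vanishing bookkeeping (the quantities $N_n(e)$ and the explicit descriptions of $\F_e^1$, $\F_e^2$ and their images in $\F|_e$) is just a more careful writeup of the computation the paper leaves implicit.
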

\begin{proof}
As mentioned above, the quotients are independent of $\psi$, so we may and will  assume that $\psi_0(E)$ is the constant function $1$.   Note that by  (\ref{eq411}), $\psi(f_{m})=\psi_{m}(E)q_{n}$, where $\psi_m(E)$ is a polynomial of degree $m$   and its $m$ different roots are \[\left\{e_n=-\frac{k^2}{2(n+1/2)^2}\middle| 0\leq n< m, n\in \Z \right\}\subset \mathcal{E}_b.\] 
Hence, for any $e\in \C\setminus{\mathcal{E}_b}$, none of the morphisms $\psi_n(E)$ vanishes at $e$,  and therefore $\F|_e^0=\F|_e$, $\F|_e^1=\{0\}$, and the unique nonzero quotient is $\F|^1_e/\F|^{0}_e\simeq  \F|_e$. For $e=e_m=-\frac{k^2}{2(m+1/2)^2}$,   all the morphisms $\psi_n(E)$ with $|n|\leq m$ do not vanish  at  $e_m$ and all  other  $\psi_n(E)$  do. Hence, as $SO(2,\C)$-modules, the nonzero quotients   are 
\[\F|^0_{e_m}/\F|^{1}_{e_m}\simeq  \chi_{-m}\oplus  \chi_{-m+1}\oplus .. .\oplus  \chi_{m},\]
\[\F|^1_{e_m}/\F|^{2}_{e_m}\simeq  \bigoplus_{|n|\geq m} \chi_{n}.\]
\end{proof}
\begin{remark}
The $O(2,\C)$-types of the Jantzen quotients come in two flavors, corresponding to the two families $\F$ with  $\omega(E)=-\frac{E}{4}-\frac{k^2}{2}$. In one of the families $\F$,  all the trivial $SO(2,\C)$ representations  in the various quotients extend to  the  trivial $O(2,\C)$ representation. In the other,  all the trivial $SO(2,\C)$ representations  in the various quotients extend to the nontrivial one-dimensional $O(2,\C)$ representation.
\end{remark}

\subsection{The Hermitian form on the Jantzen quotients}
In this section we describe the invariant Hermitian form  on the Jantzen quotients. For more  information see \cite[Sec. 4.2]{Ber2017}.

The module 
$\F^{\langle \sigma \rangle}$ can be naturally identified with the space of functions from $\F$ to $\C[E]$ that  are conjugate $\C[E]$-linear and vanish  on all but finitely many of the $K$ isotypic summands of $\F$.
That is, for $h\in \F^{\langle \sigma \rangle}$,  $f\in \F$, and $p\in \C[E]$, 
\[ h(pf)=\sigma(p)h(f),\]   where  $\sigma(p)(E)=\overline{p(\overline{E})}$. The bases $\{f_n\}$ and $\{q_n\}$  can be chosen such that under the abovementioned identification,  
\[ q_n(f_m)=\delta_{mn}.\]
 This allows us to equip each of the  Jantzen  quotients, $\F|^n_e/\F|^{n-1}_e$, with  a non-degenerate sesquilinear form $\langle \_,\_ \rangle_{e,n}$ defined by 
\[\langle  [f],[f'] \rangle_{e,n}=\left((E-e)^{-n}\psi(f)(f') \right)_{E=e}.\]
That is, for every $[f],[f']\in \F|_e^n$ with representatives $f,f'\in \F$, we apply the linear functional $\psi(f)$ to $f'$, then  we divide by $(E-e)^{n}$ (this makes sense since $[f],[f']\in \F|_e^n$) and finally  we evaluate this complex valued  function of $E$ at $e$.  The forms $\langle \_,\_ \rangle_{e,n}$  are invariant under the actions of $\BK|_e$ and $\g|_e$ in the the  sense that
\begin{eqnarray}\nonumber
&&
\langle X  \cdot [f] , [f'] \rangle_{e,n}  + \langle [f],  \sigma(X)\cdot [f'] \rangle_{e,n} = 0, \\ \nonumber
&&\langle g  \cdot [f] , \sigma_{\boldsymbol{K}} (g)  \cdot [f'] \rangle_{e,n}  =\langle [f],  [f'] \rangle_{e,n},  
\end{eqnarray}
for all $[f],[f']\in  \F|_e^n$,   $X\in \g|_e$, and  $g\in \BK|_e$. Up to a shift of the filtration parameter $n$, the Jantzen quotients are independent of (a nonzero) $\psi$. The intertwining operator $\psi$ can be chosen such that for every $e\in \R\subset \C$, the  forms $\langle \_,\_ \rangle_{e,n}$ will be Hermitian. If the invariant Hermitian form is  of a definite sign, we say that  the Jantzen quotient is \textit{infinitesimally unitary}. 
\begin{proposition}\label{propos5}
Let  $\F$ and $\psi$   be as above with $\psi_0(E)\equiv 1$. The form is of a definite sign (positive definite or negative definite) exactly on the following Jantzen quotients:
\begin{enumerate}
\item $\F|^0_e/\F|^{1}_e\simeq  \F|_e$ with $e>0$. 
\item $\F|^0_0/\F|^{1}_0\simeq  \F|_0$.
 \item $\F|^0_{e_m}/\F|^{1}_{e_m}$ with $e_m=-\frac{k^2}{2(m+1/2)^2}$, for some $m\in \mathbb{N}_0$.
\end{enumerate}
\end{proposition}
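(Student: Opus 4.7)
The plan is to compute the invariant Hermitian form on every Jantzen quotient directly from the explicit formula \eqref{eq411} for $\psi_n(E)$, and then read the sign off the factorization
\[
\psi_n(E)\;=\;\frac{1}{2^{|n|}}\prod_{j=1}^{|n|}\Bigl(\tfrac{k^{2}}{2}+E\bigl(j-\tfrac12\bigr)^{2}\Bigr).
\]
Since $q_m(f_n)=\delta_{mn}$ under the identification of Section~\ref{sec4.4.1}, the intertwining operator $\psi$ is diagonal in $K$-types, so on any Jantzen quotient the form is automatically diagonal in the basis $\{[f_n]\}$. The diagonal entry of the form on $\F|^N_e/\F|^{N+1}_e$ at $[f_n]$ is $\bigl((E-e)^{-N}\psi_n(E)\bigr)_{E=e}$, which is nonzero precisely when $\psi_n(E)$ vanishes to order exactly $N$ at $e$; the whole proposition thus reduces to tracking signs in the product above.

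For $e\ge 0$ every factor of $\psi_n(E)$ at $E=e$ is strictly positive, so all $\psi_n(e)>0$. By Proposition~\ref{propos4} the filtration is trivial in this range, so the only nonzero quotient is $\F|_e\simeq\F|^0_e/\F|^1_e$, and the form is positive definite; this disposes of cases (1) and (2). At $e=e_m=-k^{2}/(2(m+\tfrac12)^{2})$, the factor at index $j$ vanishes iff $(m+\tfrac12)^{2}=(j-\tfrac12)^{2}$, i.e.\ iff $j=m+1$. Hence $e_m$ is a simple root of $\psi_n$ exactly when $|n|\geq m+1$. On the lower quotient $\F|^0_{e_m}/\F|^1_{e_m}$ the $K$-types satisfy $|n|\leq m$ (by Proposition~\ref{propos4}), and the diagonal entry factors as
\[
\psi_n(e_m)\;=\;\left(\frac{k^{2}}{4(m+\tfrac12)^{2}}\right)^{|n|}\prod_{j=1}^{|n|}\bigl((m+\tfrac12)^{2}-(j-\tfrac12)^{2}\bigr),
\]
with every factor positive since $j\le |n|\le m$. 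This proves positive definiteness and settles case (3).

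To finish I must show every other Jantzen quotient is indefinite. On the upper quotient $\F|^1_{e_m}/\F|^2_{e_m}$, only $|n|\ge m+1$ survives, and the diagonal entry reduces to $\psi_n'(e_m)$, which equals a positive multiple of $\prod_{j\ne m+1,\ 1\le j\le |n|}\bigl((m+\tfrac12)^{2}-(j-\tfrac12)^{2}\bigr)$. The $|n|-m-1$ factors with $j>m+1$ are negative, giving an overall sign $(-1)^{|n|-m-1}$; this alternates as $|n|$ runs through $m+1,m+2,\ldots$, so the form is indefinite. For $e<0$ with $e\notin\mathcal{E}_b$ the filtration is again trivial and the sole quotient is $\F|_e$; writing $\alpha:=k/\sqrt{-2e}$ (so $\alpha+\tfrac12\notin\mathbb{N}$), a factor is positive iff $j<\alpha+\tfrac12$, hence $\psi_n(e)$ acquires an extra negative factor each time $|n|$ crosses an integer above $\alpha+\tfrac12$, and the form is indefinite.

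The main obstacle is the bookkeeping at the bound-state energies: one must correctly match the $K$-type ranges surviving on each Jantzen quotient, as dictated by Proposition~\ref{propos4}, with the sets of indices on which $\psi_n$ has a zero of the matching order at $e_m$; and one must verify that the evaluation $\bigl((E-e_m)^{-N}\psi_n(E)\bigr)_{E=e_m}$ genuinely produces the leading Taylor coefficient of $\psi_n$ at $e_m$, so that the product formula gives a clean and unambiguous sign.
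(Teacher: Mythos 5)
Your proposal is correct and takes essentially the same route as the paper: both compute the diagonal entries $\langle [f_s],[f_s]\rangle_{e,n}=\bigl((E-e)^{-n}\psi_s(E)\bigr)_{E=e}$ from the explicit product formula for $\psi_s$ and read off the signs factor by factor; in fact you supply the sign bookkeeping at $e=e_m$ (the alternating sign $(-1)^{|n|-m-1}$ on the upper quotient and positivity on the $(2m+1)$-dimensional quotient) that the paper compresses into ``proven similarly.'' The only point you omit is the paper's one-line remark that for non-real $e$ the form cannot be of definite sign, which is needed to justify the word ``exactly'' in the statement.
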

\begin{proof}
By direct calculation on $\F|^n_e/\F|^{n+1}_e$,  the form satisfies 
\[\langle  [f_s],[f_t] \rangle_{e,n}=\left((E-e)^{-n}\frac{1}{2^{|s|}}\Pi_{l=1}^{|s|}\left(\frac{k^2}{2}+E\left(l-\frac{1}{2}\right)^2\right) \right)_{E=e}\delta_{st}.\]
If $e$ is not real, then the form can not be of definite sign. If 
$e\in \R\setminus{\mathcal{E}_b}$, then by  Proposition \ref{propos4},  the unique nonzero Jantzen quotient is $\F|^0_e/\F|^{1}_e$ on which the form is given by 
\[\langle  [f_s],[f_t] \rangle_{e,0}=\left(\frac{1}{2^{|s|}}\Pi_{l=1}^{|s|}\left(\frac{k^2}{2}+e\left(l-\frac{1}{2}\right)^2\right) \right)\delta_{st},\] which is obviously positive definite for $e\geq 0$, and not of  definite sign for $e<0$. The case with $e=e_m$ is proven similarly. 
\end{proof}
For each $e\in \mathcal{E}$ we shall denote by $J(\F|_e)$ the unique infinitesimally unitary Jantzen quotient of $\F|_e$.   Combining  Proposition \ref{propos4}
and Proposition \ref{propos5}, we obtain the following. 

\begin{theorem}\label{th3}
Let  $\F$ be any one of the two generically irreducible and quasi-admissible  families of $(\g,\BK)$-modules on which $\Omega$ acts via multiplication by  $\omega(E)=-\frac{E}{4}-\frac{k^2}{2}$,  and that are generated by $\F_0$, their isotypic piece corresponding to the trivial $SO(2,\C)$ representation. The spectrum of $H$ coincides with the set of all  $E\in \X$ for which $\F|_E$  has  a nonzero infinitesimally unitary Jantzen quotient. Moreover,  $\mathcal{E}_b$   coincides with the set of all  $E\in \X$ for which $\F|_E$ has a nontrivial Jantzen filtration.
\end{theorem}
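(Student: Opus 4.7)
The plan is to deduce Theorem \ref{th3} directly from Proposition \ref{propos4} (which identifies the reducibility points, and hence the points where the Jantzen filtration is nontrivial) and Proposition \ref{propos5} (which identifies the infinitesimally unitary Jantzen quotients); no new computation is needed beyond what is already in place.

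The second assertion is immediate from Proposition \ref{propos4}: that proposition says the Jantzen filtration of $\F|_E$ is trivial precisely when $E \in \C \setminus \mathcal{E}_b$, so the set of $E$ at which the filtration is nontrivial equals $\mathcal{E}_b$. I would record this in one sentence.

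For the first assertion, I would recall the decomposition $\operatorname{Spec}(H) = \mathcal{E}_b \sqcup \{0\} \sqcup (0,\infty)$ and then use Proposition \ref{propos5} to match the two sides. That proposition lists exactly three cases in which some Jantzen quotient carries an invariant form of definite sign: (1) $E > 0$, with $\F|^0_E/\F|^1_E \simeq \F|_E$; (2) $E = 0$, with $\F|^0_0/\F|^1_0 \simeq \F|_0$; and (3) $E = e_m \in \mathcal{E}_b$, with the finite-dimensional quotient $\F|^0_{e_m}/\F|^1_{e_m}$. The union of the $E$-values appearing in (1)--(3) is exactly $\operatorname{Spec}(H)$, which gives one inclusion. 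For the reverse inclusion I would check the complement: if $E \in \R$ with $E < 0$ and $E \notin \mathcal{E}_b$, then by Proposition \ref{propos4} the only nonzero Jantzen quotient is $\F|_E$ itself, and the explicit diagonal formula for $\langle [f_s],[f_t] \rangle_{e,n}$ used in the proof of Proposition \ref{propos5} shows the form is indefinite at such an $E$; and if $E \in \C \setminus \R$, the same formula does not even produce a Hermitian form, so no Jantzen quotient can be infinitesimally unitary. Hence the set of $E$ for which $\F|_E$ has a nonzero infinitesimally unitary Jantzen quotient is exactly $\operatorname{Spec}(H)$.

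There is no real obstacle, since Propositions \ref{propos4} and \ref{propos5} have already done the work; the theorem is essentially a repackaging. The only point worth stating carefully is that at $E = e_m \in \mathcal{E}_b$ the two nonzero Jantzen quotients split cleanly into exactly one infinitesimally unitary piece (the finite-dimensional $(2m+1)$-dimensional quotient) and one non-unitary piece, so the phrase ``nonzero infinitesimally unitary Jantzen quotient'' unambiguously singles out a unique object $J(\F|_E)$, consistent with the notation introduced immediately before the statement of the theorem.
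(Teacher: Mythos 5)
Your proposal is correct and matches the paper's own argument, which simply states that the theorem follows by combining Proposition \ref{propos4} and Proposition \ref{propos5}. Your more explicit verification of the reverse inclusion (the indefiniteness of the form for $E<0$ with $E\notin\mathcal{E}_b$ and the non-Hermitian case $E\in\C\setminus\R$) is a welcome elaboration of what the paper leaves implicit.
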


\subsection{Jantzen quotients, group representations, and solution spaces}\label{apen}

In this section, after recalling the physical solution spaces for the Schr\"{o}dinger equation,  we describe how the infinitesimally unitary Jantzen quotients  can be integrated to  unitary group representations. We show that these group representations are isomorphic to the solution spaces of  the Schr\"{o}dinger equation, as unitary represenations.

\subsubsection{The Physical Solutions}
The physical solutions for the Schr\"{o}dinger equation $H\psi=E\psi$ are  complex valued functions on $\R^2$ which solve the equation, which are regular  at the origin and have the best possible behavior at  infinity e.g., see \cite[Chap. V.]{MR0093319}. For every  $ E\in \operatorname{spec}(H)$, the collection of physical solutions corresponding to $E$  can be completed to a Hilbert space, $Sol(E)$, that carries a unitary irreducible representation (of the corresponding symmetry group). 
This  nontrivial fact was first discovered by Fock, for bound states in the three dimensional case \cite{Fock:1935vv}. For the two-dimensional case  see, e.g.,  \cite{Bander1,Bander2} and \cite{Torres1998}.
We remark that for any $ E\in \operatorname{spec}(H)$,  $Sol(E)$ is a tempered representation, and for $E\in \mathcal{E}_b$ it is  a square-integrable representation. 
 The space  of $SO(2)$-finite vectors of  $Sol(E)$, denoted $Sol(E)_{SO(2)-\text{f.}}$,  coincides with the space of all physical solutions that are separable in polar coordinates of $\R^2$;  $\psi(r,\varphi)=R(r)\Phi(\varphi)$.
We shall explicitly specify a basis for $Sol(E)_{SO(2)-\text{f.}}$. For nonzero energy  we shall follow  \cite{PhysRevA.43.1186}. For zero energy we shall follow \cite{Torres1998}.

For $0\neq E\in \operatorname{spec}(H)$, the Schr\"{o}dinger equation can be solved via separation of variables in polar coordinates. Any separable solution $\psi(r,\varphi)=R(r)\Phi(\varphi)$  has $\Phi(\varphi)$ proportional to $e^{il\varphi}$, for some $l\in \Z$. For fixed such  $l$, the radial equation  has a two-dimensional solution space. Imposing the condition of regularity at the origin, we are left with a one-dimensional solution space. Explicitly, for a fixed $E_n=-\frac{k^2}{2(n-1/2)^2} 
<0$  with  $n\in \{1,2,3,...\}$, the radial solutions are spanned by 
\[(\beta_nr)^{|l|}e^{-\beta_n r/2}   \ _1F_1(-n+|l|+1;2|l|+1;\beta_nr) \]
where $\beta_n=\frac{2 k}{(n-1/2)}$, $| l |$ varies in $\{0,1,2,...,n-1\}$, and $ \ _1F_1(\alpha:\gamma;z)$ is the confluent hypergeometric function (which reduces to a polynomial function in $r$ for the above values of parameters).  For a fixed $E>0$, the radial solutions are spanned by
\[(2\sqrt{2E} r)^{|l|}e^{-i\sqrt{2E} r} \ _1F_1(-i\sqrt{\frac{k^2}{2E}}+|l|+\frac{1}{2};2|l|+1;i2\sqrt{2E} r)\]
with $| l |$ varying over all non-negative integers. For $E=0$, upon applying the Fourier transform and a linear  change of variables, the   Schr\"{o}dinger equation takes an integral form with solutions that are separable in polar coordinates on $\R^2$. As before, the solutions are of the form $\psi(r,\varphi)=R(r)\Phi(\varphi)$  with  $\Phi(\varphi)$ proportional to $e^{il\varphi}$ for some $l\in \Z$. As in  the case with $E\neq0$, for  fixed  $l\in \Z$, the corresponding solution space for the radial function is two-dimensional. Demanding regularity at $r=0$, the   solution space is one dimensional. Explicitly,  the solution space is spanned by  
\[e^{il\varphi}J_{l}(r), \]
with $J_{l}(r)$ being the Bessel function of the first kind. 

\subsubsection{From Jantzen quotients to the solution spaces}

In Proposition \ref{propos5}, we saw that  each $e\in \mathcal{E}\subset \R$ has  exactly one infinitesimally unitary Jantzen quotient, $J(\F|_e)$. Recall that $J(\F|_e)$   is a  $(\g|_e,\BK|_e)$-module.  In fact,  each quotient is an irreducible   $(\g|_e,\BK|_e)$-module. The discussion in \cite[Sec. 4.3]{Ber2017} explains how a theorem of Nelson \cite{Nelson59} implies that,  as a representation of $\g|_e^{\sigma}$, for any $e\in \mathcal{E}$,  $J(\F|_e)$ can be integrated to a unitary  representation  of the simply connected Lie group with Lie algebra $\g|_e^{\sigma}$. Here we shall proceed using a different approach. We shall   separately deal with   each of the cases   $e>0$, $e=0$, and $e=e_m$  for some $m\in \mathbb{N}_0$, and  explain  why the representations  of the Lie algebras  on  the various $J(\F|_e)$  can be integrated to unitary irreducible representations of $SO_0(2,1)$,  $SO(2)\ltimes \mathbb{R}^2$ and $SO(3)$, depending on the value of $e$. \\   

\textbf{The  $SO_0(2,1)$ case }\\

For $e>0$,  $SO(2,1)\simeq \widetilde{\boldsymbol{SO_3}}|_e^{\sigma}$ is reductive with  maximal compact subgroup $O(2)\simeq \BK|_e^{\sigma}$. By a theorem  of Harish-Chandra and Lepowsky  (see e.g., \cite[Sec. 0.3]{Vogan81}), the Jantzen quotient $J(\F|_e)$  can be  integrated to a unitary irreducible representation of $SO(2,1)$.  The $O(2)$-types and infinitesimal character are enough to parametrize the unitary dual of  $SO(2,1)$.   Using this we can determine that for each of the two families $\F$,  the corresponding unitary irreducible representation  of    $SO(2,1)\simeq \widetilde{\boldsymbol{SO_3}}|_e^{\sigma}$ obtained by integrating $J(\F|_e)$  is a unitary principal series on which the Casimir acts by multiplication by  $\omega(e)$. Both possibilities restrict to the same  unitary irreducible representation of $SO_0(2,1)$, the connected component of $SO(2,1)$.  This representation is isomorphic to the solution space of the   Schr\"{o}dinger equation of the hydrogen atom in two dimensions with energy eigenvalue $e$, compare with  \cite{Bander2}.\\   

\textbf{The $SO(2)\ltimes\mathbb{\R}^2$ case  }\\

The unitary duals of $O(2)\ltimes \mathbb{\R}^2$ and $SO(2)\ltimes\mathbb{\R}^2$ can be  calculated using the \textit{Mackey machine}. Again,  the $O(2)$-types in the case of $O(2)\ltimes \mathbb{\R}^2$ and the $SO(2)$-types in the case of $SO(2)\ltimes \mathbb{\R}^2$ together with  the action of the center of the enveloping algebra are sufficient  to parameterize the unitary duals.  For each of the two families $\F$, it is straightforward, in this low-dimensional case, to find the  unitary irreducible representation of  $O(2)\ltimes \mathbb{R}^2 \simeq \widetilde{\boldsymbol{SO_3}}|_0^{\sigma}$  that have their underlying  Harish-Chandra module given by $J(\F|_0)$.  As in the case of $e>0$, the two non-isomorphic irreducible unitary representations of $O(2)\ltimes \mathbb{R}^2$ have the same unitary irreducible restriction to $SO(2)\ltimes \mathbb{R}^2$. This representation is  isomorphic to the solution space of the   Schr\"{o}dinger equation of the hydrogen atom in two dimensions with energy eigenvalue $e=0$, compare with  \cite{Torres1998}.\\   

\textbf{The $SO(3)$ case  }\\

For $e=e_m$ with  $m\in \mathbb{N}_0$, $J(\F|_{e_m})$ is a $(2m+1)$-dimensional irreducible representation  of $\mathfrak{so}(3)\simeq \g|_{e_m}$ and hence can be integrated to the unique unitary irreducible   $(2m+1)$-dimensional representation of $SO(3)\simeq \widetilde{\boldsymbol{SO_3}}|_{e_m}^{\sigma}$. This representation is  isomorphic to the solution space of the   Schr\"{o}dinger equation of the hydrogen atom in two dimensions with energy eigenvalue $e=e_m=-\frac{k^2}{2(m+1/2)^2}$, compare with   \cite{Bander1}.

Summarizing, we have shown the following result.

\begin{theorem}\label{th4}
Let  $\F$ be as in Theorem \ref{th3}. Then for any $e\in  \operatorname{Spec}(H)$, the Jantzen quotient $J(\F|_e)$ can be integrated to the unitary irreducible representation of the connected component of $\widetilde{\boldsymbol{SO_3}}|_{e}^{\sigma}$ which is isomorphic to $Sol(e)$. 
\end{theorem}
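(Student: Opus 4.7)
The plan is to split along the three components of $\operatorname{Spec}(H) = \mathcal{E}_s \sqcup \mathcal{E}_0 \sqcup \mathcal{E}_b$ and, for each $e\in\operatorname{Spec}(H)$, first integrate the irreducible admissible $(\g|_e,\BK|_e)$-module $J(\F|_e)$ to a unitary irreducible representation of $\widetilde{\boldsymbol{SO_3}}|_{e}^{\sigma}$, and then identify its restriction to the connected component with $Sol(e)$ by matching a short list of invariants against the explicit physical solutions recalled in the previous subsection. In each case the invariants will be the $K$-type decomposition of $J(\F|_e)$ (known from Fact \ref{fact} and Proposition \ref{propos4}) together with the scalar $\omega(e)=-e/4-k^2/2$ by which the regularized Casimir $\Omega$ acts.

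For $e>0$ the fiber $\widetilde{\boldsymbol{SO_3}}|_{e}^{\sigma}\simeq SO(2,1)$ is a real reductive Lie group with maximal compact $\BK|_e^{\sigma}\simeq O(2)$, so by the Harish-Chandra/Lepowsky integration theorem the infinitesimally unitary module $J(\F|_e)$ from Proposition \ref{propos5} is the Harish-Chandra module of a unique unitary irreducible representation of $SO(2,1)$. The unitary dual of $SO(2,1)$ is parameterized by the $O(2)$-type data together with the Casimir eigenvalue; since every $SO(2,\C)$-type appears with multiplicity one and $\Omega$ acts by $\omega(e)$, this representation is pinned down as a unitary principal series, whose restriction to $SO_0(2,1)$ remains irreducible and matches the confluent-hypergeometric solutions of \cite{Bander2}.

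For $e=0$ the fiber $\widetilde{\boldsymbol{SO_3}}|_{0}^{\sigma}\simeq O(2)\ltimes \R^2$ is a motion group, and I would invoke the Mackey machine: unitary irreducibles of $O(2)\ltimes\R^2$ and of its connected subgroup $SO(2)\ltimes\R^2$ are induced from characters of $\R^2$ and are uniquely labeled by their $O(2)$- (resp.\ $SO(2)$-) types together with the value of the generator of the center of the enveloping algebra; matching these invariants against $J(\F|_0)$ and comparing with the Bessel-function solutions of \cite{Torres1998} yields the identification with $Sol(0)$. For $e=e_m\in\mathcal{E}_b$, Proposition \ref{propos4} gives $\dim J(\F|_{e_m})=2m+1$ with integer $SO(2)$-weights $-m,\dots,m$; the $\g|_{e_m}\simeq\mathfrak{so}(3)$-action then integrates to the unique unitary $(2m+1)$-dimensional representation of the simply connected cover $SU(2)$, which descends to $SO(3)\simeq \widetilde{\boldsymbol{SO_3}}|_{e_m}^{\sigma}$ by the integrality of the weights, and matches the polynomial radial solutions following \cite{Bander1}.

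The main obstacle will be upgrading the three invariant-matching arguments from an abstract identification of equivalence classes to a canonical isomorphism of $(\g|_e,\BK|_e)$-modules with $Sol(e)_{SO(2)\text{-f.}}$. The key point is to exploit the physical realization that defined $\g$ in Section \ref{cent}: the generators $L,A_x,A_y$ act as explicit differential operators on $C^{\infty}(\R^2\setminus\{0\})$ that preserve the space of $SO(2)$-finite physical solutions at energy $e$, producing a natural morphism $J(\F|_e)\to Sol(e)_{SO(2)\text{-f.}}$ of $(\g|_e,\BK|_e)$-modules, nonzero because it sends the generator of the trivial $SO(2,\C)$-type to a genuine $SO(2)$-invariant physical solution, and hence an isomorphism by Schur's lemma on irreducibles. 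A Nelson-style analytic-vector argument, using that $\Omega$ acts by a real scalar and $L, A_x, A_y$ are (up to $i$) symmetric with respect to the $L^2$-inner product on the physical side, then promotes this to an isomorphism of unitary representations upon Hilbert-space completion, completing the proof.
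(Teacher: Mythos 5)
Your proposal is correct and follows essentially the same route as the paper: the same three-way split along $\mathcal{E}_s$, $\mathcal{E}_0$, $\mathcal{E}_b$, with the Harish-Chandra/Lepowsky integration theorem for $SO(2,1)$, the Mackey machine for $O(2)\ltimes\R^2$, and finite-dimensional $\mathfrak{so}(3)$-theory for the bound states, identifying the result with $Sol(e)$ by matching $K$-types and the Casimir eigenvalue against the explicit physical solutions. Your closing paragraph on upgrading the identification to a canonical intertwiner via the differential-operator realization goes beyond what the paper does (and what the statement, which only asserts an isomorphism class, requires), but it is a reasonable addition rather than a deviation.
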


 \section{Discussion}\label{d} 
 \subsection{Physical considerations}

The energy spectrum of the bound states of the hydrogen atom is  a fundamental  example for quantization. In this paper we have shown how quantization of the spectrum  is related to  discretely appearing, infinitesimally unitary Jantzen quotients. We believe that  this connection should be studied further. It seems that similar results hold for the $n$-dimensional hydrogen atom, and we shall treat this case in the future.   But already in two dimensions an interesting feature is the algebraic/analytic structure of the symmetries and  solutions of the hydrogen atom. In principle, knowing $\F$ over  some Zariski open subset of $\X=\C$ is enough to completely  determine $\F$ everywhere on $\C$. For example, this implies that if one knows the  solution spaces for the Schr\"{o}dinger equation with energy eigenvalues confined to some open sub-interval of $(0,\infty)$, one can recover all solutions for all possible energy values. This phenomenon might  be useful in experiments.  One may wonder what should be the physical meaning of this; do the solution spaces for scattering states know about the   bound state solution spaces and vise versa? As we have shown, in the two-dimensional case the answer is affirmative. 

From a  physics perspective there are a couple of natural questions that need to be asked. 
\begin{itemize}
\item Is there a physical meaning that can be attached to the fibers $\F|_e$ with $e$ not in  the spectrum of $H$?
\item Is there a physical explanation for the existence of the  two (rather than just one) families of $(\g,\BK)$-modules that lead to the  solutions of the  Schr\"{o}dinger equation? \end{itemize}

\subsection{Spectral theory considerations}\label{measure}
Up until now, we have solely considered algebraic aspects of the 2D Schr\"{o}dinger equation of the hydrogen atom. But there are various  analytic aspects  that were studied extensively in the past. Most notable among them  is the spectral decomposition for  the Schr\"{o}dinger operator. We shall use the last part of this section to  briefly review some  aspects of  spectral theory. We then speculate on possible connections  between  these analytic  aspects and the algebraic families  introduced  in the paper.  We intend to further study these relations elsewhere.

For $n\geq 3$, the  Schr\"{o}dinger operator, $H=-\frac{\hbar^2}{2\mu}\triangle-\frac{k}{r}$, with domain $C_c^{\infty}(\R^n)\subset L^2(\R^n)$, is essentially self-adjoint. That is, it  has a unique self-adjoint extension. On the other hand,  for $n=2$, and with domain  $C_c^{\infty}(\R^2\setminus{\{0\}})$,   the Schr\"{o}dinger operator has a family of self-adjoint extensions parameterized by $S^1$, see  \cite{DEOLIVEIRA2009251} (The Schr\"{o}dinger operator is not well-defined as an unbounded Hilbert space operator on the domain $C_c^{\infty}(\R^2)$). However, $H$ is semi-bounded \cite{MR0133586} and therefore it has a distinguished  Friedrichs  extension. For this extension, one can use the spectral theory for self-adjoint operators e.g., \cite[Ch. 8]{MR1395149} or \cite[Ch. I Sec. 4]{MR3467631} and obtain a direct-integral decomposition of 
   $L^2(\R^2)$    such that 
\begin{eqnarray}\nonumber
&&L^2(\R^2)\cong \int^{\oplus}_{\operatorname{spec}(H)}\mathcal{H}_{\lambda}d\mu(\lambda), \\ \nonumber
&&(Hs)(\lambda)=\lambda s(\lambda),
\end{eqnarray}
for every section $s$ of $\int^{\oplus}_{\operatorname{spec}(H)}\mathcal{H}_{\lambda}d\mu(\lambda)$ that lies in the domain of $H$. 
For more information on direct integrals see, e.g.,  \cite[Chapter XII]{dixmier1982c},   \cite[Chapter I]{dixmier1981neumann}.
The inclusion of the dense subspace  $S:=C_c^{\infty}(\R^2\setminus{\{0\}})$ into $L^2(\R^2)$  factors through a  Hilbert-Schmidt transformation. 
Hence we can use the    Gelfand-Kostyuchenko method \cite[Sec. 1]{MR1075727}, \cite[Chap. XVII]{maurin1967methods}, \cite[Chap. I]{MR3467631}, and for almost every $\lambda\in  \operatorname{spec}(H)$ obtain  an injective linear map 
\[r_{\lambda}:\mathcal{H}_\lambda \longrightarrow S^+,\] 
where $S^+$  is the Hermitian dual of $S$, which we can view as the space of   distributions on $\R^2\setminus{\{0\}}$.
 Moreover,  $r_{\lambda}(\mathcal{H}_\lambda)\subset   S^+_{\lambda}$ where $S^+_{\lambda}$ is the eigenspace of $H$ with eigenvalue $\lambda$ in the distribution space $S^+$.   
Hence, we obtain an injective map, for which we keep the same notation:
\[{r}_{\lambda}:\mathcal{H}_\lambda \longrightarrow S^+_\lambda .\] 
Elliptic regularity for $H$ implies that the space of eigendistributions, $S^+_\lambda $, is canonically isomorphic to $C_{\lambda}^{\infty}(\R^2\setminus \{ 0\})$, the eigenspace of $H$ with eigenvalue $\lambda$ in $C^{\infty}(\R^2\setminus \{ 0\}) $. Hence, for almost every $\lambda$ we have an injection  
\[\widetilde{r}_{\lambda}:\mathcal{H}_\lambda \longrightarrow C_{\lambda}^{\infty}(\R^2\setminus \{ 0\}) .\]
The space $C^{\infty}(\R^2\setminus \{ 0\})$ carries an action of $C_{\mathbb{D}}(H)$, the centralizer of $H$ (see Section \ref{cent}), and hence also of the family of Lie algebras $\g$. It also carries a compatible action of $K_0:=SO(2)$. This implies that  the space $(C^{\infty}(\R^2\setminus \{ 0\}))_{K_0-\text{f.}}$ of $K_0$-finite  vectors in $C^{\infty}(\R^2\setminus \{ 0\})$ is a $(\g,K_0)$-module.  By explicit calculation, it can be shown that  $(C^{\infty}_{\lambda}(\R^2\setminus \{ 0\}))_{K_0-\text{f.}}$ is a quasi-admissible  $(\g|_{\lambda},K_0)$-module. In fact, the multiplicity of each irreducible representation of $K_0$   in $(C^{\infty}_{\lambda}(\R^2\setminus \{ 0\}))_{K_0-\text{f.}}$ is at most  two. 
Since $SO(2)$ acts on   $L^2(\R^2)$  via a one parameter group of unitary operators that commute with $H$,   almost every  $\mathcal{H}_\lambda$ is equipped with a unitary action of $SO(2)$. The maps $\widetilde{r}_{\lambda}$ are equivariant with respect to the actions of $SO(2)$ and injectivity implies that  $(\mathcal{H}_\lambda)_{K_0-\text{f.}}$ is a quasi-admissible representation of $SO(2)$. The image of $(\mathcal{H}_\lambda)_{K_0-\text{f.}}$ under $\widetilde{r}_{\lambda}$ is a sub-$(\g|_{\lambda},K_0)$-module, for almost every $\lambda$. We are now ready to state a conjecture that relates our algebraic approach to spectral theory.
\begin{conjecture}
For almost every $\lambda\in  \operatorname{spec}(H)$, 
\[\widetilde{r}_{\lambda}((\mathcal{H}_\lambda)_{K_0-\text{f.}})=Sol(\lambda)_{K_0-\text{f.}}.\]
\end{conjecture}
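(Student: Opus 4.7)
The plan is to decompose the problem along angular momentum sectors, use the explicit structure of the Friedrichs extension of $H$ on each sector, and then match the generalized eigenfunctions produced by spectral theory with the explicit physical solutions tabulated earlier in the paper.

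First, since $H$ commutes with the $SO(2)$ action, I would decompose $L^2(\R^2) \cong \bigoplus_{n\in\Z} L^2(\R_+, r\,dr) \otimes \C e^{in\varphi}$ and observe that the Friedrichs extension of $H$ respects this decomposition (as this is precisely the $K_0$-isotypic decomposition). Consequently the direct-integral decomposition factors as $\mathcal{H}_\lambda = \bigoplus_n \mathcal{H}_\lambda^{(n)}$, each summand arising from spectral theory of a radial Schr\"odinger operator $H_n$ on $L^2(\R_+, r\,dr)$. Both sides of the conjectured equality decompose compatibly along $K_0$-types, so it suffices to prove the equality in each fixed angular momentum sector.

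Second, I would analyze each $H_n$ on its own. For $|n|\geq 1$, the centrifugal term $n^2/r^2$ puts $H_n$ in the limit-point case at the origin, so the self-adjoint extension is unique, and the Gelfand--Kostyuchenko construction identifies $\widetilde r_\lambda(\mathcal H_\lambda^{(n)})$ with the one-dimensional subspace of radial eigenfunctions that are regular at $r=0$ and have tempered behaviour at infinity. For $n=0$, where both independent radial solutions may be integrable near the origin, I would use the variational characterization of the Friedrichs extension together with the explicit asymptotic form of the confluent hypergeometric radial solutions to verify that the Friedrichs extension selects the branch that is regular in the physical sense. Matching with the explicit formulas for $Sol(\lambda)_{K_0\text{-f.}}$ recorded earlier (Whittaker/confluent hypergeometric for $\lambda \neq 0$ and $J_{|l|}(r)$ for $\lambda = 0$) then yields the inclusion $\widetilde r_\lambda((\mathcal H_\lambda)_{K_0\text{-f.}}) \subseteq Sol(\lambda)_{K_0\text{-f.}}$ on each isotypic piece.

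Third, I would upgrade the inclusion to equality by a multiplicity count. By the paper's remark, each $K_0$-type appears with multiplicity at most two in $C^\infty_\lambda(\R^2\setminus\{0\})_{K_0\text{-f.}}$; the physical solutions provide one independent element in each type (and, for $\lambda \in \mathcal E_b$, only finitely many nonzero types), while injectivity of $\widetilde r_\lambda$ together with the known nonvanishing of $\mathcal H_\lambda^{(n)}$ (via Weyl--Titchmarsh theory for $H_n$, valid for almost every $\lambda \in \operatorname{spec}(H_n)$) gives at least one element on the left in each relevant type. For bound states this is automatic because $\mathcal H_\lambda$ literally equals $\ker(H-\lambda)\cap L^2(\R^2)$, which by the classical solution of the 2D hydrogen atom coincides with $Sol(\lambda)$.

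The principal obstacle is the identification of the Friedrichs boundary condition at the origin with the "regular at origin" condition defining $Sol(\lambda)$, especially in the $n=0$ sector where the limit-circle situation requires a genuine choice of self-adjoint extension; here one must show that the variational extension coincides with the physicists' prescription. A secondary technical issue is making precise the "best possible behaviour at infinity" condition for the continuous spectrum $\lambda > 0$ and matching it against the tempered output of the Gelfand--Kostyuchenko method, which requires controlling the Coulomb-modified plane-wave asymptotics of the scattering eigenfunctions.
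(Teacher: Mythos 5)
First, a point of order: the paper does not prove this statement. It appears only as a conjecture in the discussion section, and the author explicitly defers its study to future work, so there is no proof of record to compare yours against. What follows is an assessment of your proposal on its own terms.

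Your strategy --- reduce along $K_0$-isotypic sectors to radial Sturm--Liouville operators $H_n$ on $L^2(\R_+, r\,dr)$, identify the Gelfand--Kostyuchenko eigenfunctionals sector by sector via Weyl--Titchmarsh theory, and close with a multiplicity count against the explicit confluent hypergeometric and Bessel solutions --- is the natural line of attack, and the reduction steps (compatibility of the direct integral with the $SO(2)$-decomposition, simplicity of the a.c.\ spectrum of each $H_n$ for $|n|\geq 1$, treating bound states as atoms of the spectral measure) are sound. However, as you yourself flag, the proposal leaves unproved precisely the two points that carry the content of the conjecture. (i) In the $n=0$ sector the operator is limit-circle at the origin, and you must show that the Friedrichs extension of $H$ on $C_c^{\infty}(\R^2\setminus\{0\})$ induces on that sector the boundary condition selecting the solution without the $\log r$ singularity; this is a genuine computation with the quadratic form domain (or with the deficiency elements of $H_0$) and cannot be dispatched by invoking a ``variational characterization'' --- the paper points out that there is a full circle of self-adjoint extensions here, so the answer really does depend on which one is chosen. (ii) For $|n|\geq 1$ and $\lambda>0$ you assert that $\widetilde{r}_{\lambda}(\mathcal{H}_{\lambda}^{(n)})$ is the line of eigenfunctions regular at the origin; the Gelfand--Kostyuchenko construction only tells you it is \emph{some} line of eigendistributions, and identifying it with the regular Weyl solution requires subordinacy theory or a direct computation with the spectral kernel of $H_n$ (note that the irregular solution, though not locally $L^2(r\,dr)$ at the origin for $|n|\geq 1$, is still a perfectly good element of $C^{\infty}_{\lambda}(\R^2\setminus\{0\})$, which is why the multiplicity there is two and why this step is not automatic). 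Until (i) and (ii) are carried out, the inclusion $\widetilde{r}_{\lambda}((\mathcal{H}_{\lambda})_{K_0\text{-f.}})\subseteq Sol(\lambda)_{K_0\text{-f.}}$ --- the step from which everything else follows by your multiplicity count --- is not established. The proposal is a credible roadmap rather than a proof; the approach is not wrong, but the conjecture remains open after it.
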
 
Note that by Theorem \ref{th4}, the  $(\g|_{\lambda},K)$-module $Sol(\lambda)_{K_0-\text{f.}}$ is isomorphic to  $J(\F|_{\lambda})$. So if the conjecture holds,  for almost every $\lambda\in  \operatorname{spec}(H)$, the abstract space $(\mathcal{H}_\lambda)_{K_0-\text{f.}}$ is a $(\g|_{\lambda},K_0)$-module isomorphic to  $J(\F|_{\lambda})$.

\bibliography{references}
\bibliographystyle{alpha}

\end{document}